\newcommand{\N}{{\mathbb{N}}} 
\newcommand{\C}{{\mathbb{C}}} 
\newcommand{\F}{{\mathbb{F}}}
\DeclareMathOperator{\dis}{d}
\DeclareMathOperator{\Dis}{D}
\DeclareMathOperator{\supp}{supp}
\DeclareMathOperator{\ev}{ev}
\DeclareMathOperator{\w}{w}
\DeclareMathOperator{\Span}{span}
\NewDocumentCommand{\var}{g}{
	\ensuremath{\operatorname{Var}_{\IfNoValueTF{#1}{k}{#1}}}%
}
\DeclarePairedDelimiterX\conj[1]{\lbrace}{\rbrace}{#1} 
\DeclarePairedDelimiterX\paren[1]{\lparen}{\rparen}{#1} 
\DeclarePairedDelimiterX\braces[1]{\lbrace}{\rbrace}{\ifblank{#1}{\:\cdot\:}{#1}} 
\DeclarePairedDelimiterX\scalar[1]{\langle}{\rangle}{\ifblank{#1}{\:\cdot\:}{#1}} 
\DeclarePairedDelimiterX\card[1]{\lvert}{\rvert}{#1} 
\DeclarePairedDelimiterX\gen[1]{\langle}{\rangle}{\ifblank{#1}{\:\cdot\:}{#1}} 
\DeclarePairedDelimiterX\abs[1]{\mid}{\mid}{#1}
\DeclarePairedDelimiter\floor{\lfloor}{\rfloor}
\theoremstyle{plain} 
\newtheorem{theorem}{Theorem}[section]
\newtheorem{lemma}[theorem]{Lemma}
\newtheorem{corollary}[theorem]{Corollary}
\newtheorem{proposition}[theorem]{Proposition}
\theoremstyle{definition} 
\newtheorem{definition}[theorem]{Definition}
\newtheorem{example}[theorem]{Example}%
\newtheorem{remark}[theorem]{Remark}
\title{MDS, Hermitian Almost MDS, and  Gilbert-Varshamov  Quantum Codes from Generalized Monomial-Cartesian Codes}
\author{Beatriz Barbero-Lucas\footnote{School of Mathematics and Statistics, University College Dublin, Ireland}, Fernando Hernando\footnote{Instituto Universitario de Matem\'aticas y Aplicaciones de Castell\'on and Departamento de Matem\'aticas, Universitat Jaume I, Campus de Riu Sec, 12071 Castell\'o, Spain},\\ Helena Mart\'in-Cruz\footnote{Instituto Universitario de Matem\'aticas y Aplicaciones de Castell\'on and Departamento de Matem\'aticas, Universitat Jaume I, Campus de Riu Sec, 12071 Castell\'o, Spain},  Gary McGuire\footnote{School of Mathematics and Statistics, University College Dublin, Ireland}}
\date{\today}
\begin{document}

\maketitle

\begin{abstract}
    We construct new stabilizer quantum error-correcting codes from generalized monomial-Cartesian codes. 
    Our construction uses an explicitly defined  twist vector, and we present formulas
    for the minimum distance and dimension. 
    Generalized monomial-Cartesian codes arise from polynomials in $m$ variables.
    When $m=1$ our codes are MDS, and when $m=2$ 
    and our lower bound for the minimum distance is $3$ the codes are at least Hermitian Almost MDS. 
    For an infinite family of parameters when $m=2$
    we prove that our codes beat the Gilbert-Varshamov bound. 
    We also present many examples of our codes that are better
    than any known code in the literature.
\end{abstract}

\section{Introduction}

Certain classically intractable problems can become feasible when approached with the computational power of quantum computers. This was demonstrated through Shor's algorithm which solves in polynomial time the prime factorization problem and discrete logarithm problem on quantum computers \cite{Shor2}. Due to this fact, researchers and companies are actively engaged in constructing quantum computers with many qubits \cite{Cas2017,Brooks23}. Quantum computer implementations have higher error rates than classical computers, making reliability a challenge. However, despite quantum information being unclonable \cite{26RBC,8AS}, it was shown that quantum error correction techniques can be used \cite{23RBC,95kkk}.  Over the last twenty-five years, error-correction has proved to be one of the main obstacles to scaling up quantum computing and quantum information processing.

There is an extensive study of quantum error-correcting codes, see for example the papers \cite{Gottesman,St96,18kkk,Calderbank,45kkk,ABKL2000I,ABKL2000II} for the binary case and \cite{BE,71kkk,ALT2001,AK,Feng2002,GBR2004,Ketkar,lag1,lag2,GGHR2017,Song,CaoCui,GHMR2023} for the general case.
Many of the known quantum error-correcting codes are stabilizer codes. Let $\C$ be the complex field, let $q$ be a prime power and let $n$ be a positive integer. A stabilizer code $Q\neq \{0\}$ is the common eigenspace of an abelian subgroup of the error group $G_n$ generated by a nice error basis on the space $\C^{q^n}$. The code $Q$ has minimum distance $d$ whenever all errors in $G_n$ with weight less than $d$ can be detected, or have no effect on $Q$, but some errors of weight $d$ cannot be detected. A code as above has parameters $[[n,k,d]]_q$ when it is a $q^k$-dimensional subspace of $\C^{q^n}$ and has minimum distance $d$ (see, for instance, \cite{Calderbank,Ketkar}). 
Stabilizer quantum error-correcting codes have been studied by many authors because they can be constructed  from classical additive codes  in $\mathbb{F}_q^{2n}$ which are self-orthogonal with respect to a trace symplectic form. 
In particular, stabilizer codes can be obtained from suitable Hermitian self-orthogonal classical linear codes (see \cite{Ketkar} or \cite{Calderbank,BE,AK} for details). 
We will utilize this construction.

Many constructions of classical codes
start with a quotient polynomial ring of the form
$\F_q[X_1,\ldots ,X_m]/I$ where $I$ is an ideal.
    Affine variety codes were introduced by Fitzgerald and Lax in \cite{fit},
    with a  general ideal $I$.
    Our codes $C_{\boldsymbol{v},\Delta,Z}$ (defined in the next section) are a type of generalized affine variety code, so we could use this name. However, since the codes we define are generalized monomial-Cartesian codes, introduced in \cite{LMS2020}, and although the definition is slightly different, we are going to call our codes $C_{\boldsymbol{v},\Delta,Z}$
     \textit{generalized monomial-Cartesian codes}.

    Monomial-Cartesian codes (MCCs) are a class of evaluation codes obtained as the image of maps
$$
\ev_S\colon V_\Delta \subset \mathbb{F}_q[X_1,\dots,X_m]/I \longrightarrow  \mathbb{F}_q^{n} \textrm{, } \quad \ev_S(f)=\left(f(\boldsymbol{\beta}_1),\dots,f(\boldsymbol{\beta}_n)\right),
$$
where $m$ is a positive integer larger than $1$, $S=S_1\times\cdots\times S_m=\{\boldsymbol{\beta}_1,\dots,\boldsymbol{\beta}_n\}$ is a Cartesian-product subset of $\mathbb{F}_q^m$, $I$ is the vanishing ideal at $S$ of $\mathbb{F}_q[X_1,\dots,X_m]$, and $V_\Delta$ is an $\mathbb{F}_q$-linear space generated by classes of monomials. MCCs were introduced in \cite{LMS2020} with only algebraic tools, see also  \cite{LSV2021}. These codes have several  different applications in the literature, such as quantum codes, LRCs with availability, polar codes and $(r,\delta)$-LRCs \cite{LMS2020,CLMS2021,GHMC2023}. 

Generalized monomial-Cartesian codes arise when changing the evaluation map $\ev_S$ to twist each coordinate of $\ev_S(f)$ by nonzero elements of $\F_q$. In this article we will use generalized MCCs, where the set $S_1$ is a certain fixed set, and we will use the same name for this construction, see \Cref{gmcc}.
We will use generalized monomial-Cartesian codes to 
construct Hermitian self-orthogonal classical linear codes,
and thereby construct stabilizer quantum codes.
We present some evidence comparing our codes to codes in \cite{ternary,GHR2015MPC,MatrixProductCodes2018,Kolotolu2019QUANTUMCW,CaoCui,HermitianDualConstacyclic2021,Constacyclic2022} which shows that they are very good 
quantum codes, and sometimes optimal.

Quantum MDS codes are those achieving the quantum Singleton bound, there are many papers on this type of codes (some recent papers are \cite{Fang,Ball,Liu-LiuX}). 
The MDS conjecture limits the length of a $q$-ary quantum MDS code to be at most $q^2+2$ \cite{Ketkar}. Thus, another goal is to obtain longer $q$-ary codes with good parameters. With our  construction, we achieve this.

The paper is laid out as follows.
After the preliminaries in \Cref{preliminaries}, we present our construction in \Cref{gen}.
Previous works using a twist vector have proved the existence of a twist
vector with the required properties, whereas a feature of our construction
is that we define the twist vector explicitly, see \eqref{twist1} in \Cref{gen}.
We present a general construction first (\Cref{te:SelfOrthogonal2}), and then a more specific construction that 
allows us to control the minimum distance (\Cref{main2}).
In \Cref{secMDS} we will show that our construction with $m=1$ gives MDS codes.
We also prove that when  $m=2$ 
    and our lower bound for the minimum distance is 3 the codes are at least Hermitian Almost MDS. 
    \Cref{secGV} contains a proof that for an infinite family of parameters when $m=2$,
     our codes beat the Gilbert-Varshamov bound. 
     Finally, in \Cref{secex} we present some examples with small parameters that 
     beat the best known codes in the literature. 

\section{Preliminaries}\label{preliminaries}

In this paper, we will assume $q$ is odd, although in this section the definitions hold for any $q$.
Let us denote by $\N$ the set of positive integers and by $\N_0$ the set of nonnegative integers. For any two vectors $\boldsymbol{a}=(a_0,\dots,a_{n-1})$, $\boldsymbol{b}=(b_0,\dots,b_{n-1}) \in \F_{q^2}^n$, their Hermitian inner product is defined as
$$\boldsymbol{a} \cdot_h \boldsymbol{b}=\sum_{i=0}^{n-1} a_ib_i^q,$$
their Euclidean inner product is defined as
$$\boldsymbol{a} \cdot_e \boldsymbol{b}=\sum_{i=0}^{n-1} a_ib_i,$$
and their * product is defined as 
$$(a_0,\dots,a_{n-1})*(b_0,\dots,b_{n-1})=(a_0\cdot b_0,\dots,a_{n-1}\cdot b_{n-1}).$$
Let the symbol $\perp_h$ (respectively, $\perp_e$) mean dual with respect to Hermitian (respectively, Euclidean) inner product. 
For a vector subspace (or code) $C$ of $\F_{q^2}^n$ we let $C^{\perp_h}$ (respectively, $C^{\perp_e}$) denote the orthogonal vector subspace (the dual code) with respect to the Hermitian (respectively, Euclidean) inner product. We denote by $\dis(C)$ the minimum distance of $C$. Let $s$ be a nonnegative integer and $\boldsymbol{c}=(c_0,\dots,c_{n-1})\in C$ be a codeword. We denote $\boldsymbol{c}^{s}=(c_0^s,\dots,c_{n-1}^s)$ and
$$C^s:=\{\boldsymbol{c}^s \mid \boldsymbol{c}\in C\}\subseteq \F_{q^2}^n.$$
Let us denote by $\w(\boldsymbol{c})$ the Hamming weight of $\boldsymbol{c}$. We say that two codes are isometric if there exists a bijective mapping between them that preserves Hamming weights.

\begin{theorem}[\cite{Aly, Ketkar}]\label{th: quantherm} 
    Let $C$ be a linear $[n,k,d]$ error-correcting code over the field $\F_{q^2}$ such that $C\subseteq C^{\perp_h}$. Then, there exists an $[[n,n-2k,\geq d^{\perp_h}]]_q$ stabilizer quantum code, where $d^{\perp_h}$ stands for the minimum distance of $C^{\perp_h}$.
\end{theorem}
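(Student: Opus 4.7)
The plan is to reduce the Hermitian self-orthogonality hypothesis over $\F_{q^2}$ to a trace-symplectic self-orthogonality condition over $\F_q$, and then invoke the general correspondence between additive trace-symplectic self-orthogonal codes in $\F_q^{2n}$ and stabilizer codes in $\C^{q^n}$. The first step is to fix an $\F_q$-basis $\{1,\alpha\}$ of $\F_{q^2}$ (conveniently one with $\alpha^q = -\alpha$, which exists because $q$ is odd) and to define the $\F_q$-linear map $\phi\colon \F_{q^2}^n \to \F_q^{2n}$ by $\phi(\boldsymbol{a}+\alpha\boldsymbol{b}) = (\boldsymbol{a},\boldsymbol{b})$. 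The image $\phi(C)$ is an additive subgroup of $\F_q^{2n}$ of $\F_q$-dimension $2k$, hence of size $q^{2k}$.

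The key translation lemma is that $\phi$ is an isometry between the Hermitian pairing on $\F_{q^2}^n$ and the trace-symplectic pairing on $\F_q^{2n}$, in the sense that for any $\boldsymbol{x},\boldsymbol{y}\in \F_{q^2}^n$ one has
\[
\operatorname{Tr}_{\F_{q^2}/\F_q}(\boldsymbol{x}\cdot_h \boldsymbol{y}) = c \cdot \bigl\langle \phi(\boldsymbol{x}), \phi(\boldsymbol{y})\bigr\rangle_{s}
\]
for some nonzero $c\in \F_q$, where $\langle\cdot,\cdot\rangle_s$ denotes the standard symplectic form on $\F_q^{2n}$. I would prove this by expanding $x_i y_i^q$ in the basis $\{1,\alpha\}$, using $\alpha^q=-\alpha$, and reading off the trace. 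A routine consequence is that $C \subseteq C^{\perp_h}$ forces $\phi(C)$ to be trace-symplectic self-orthogonal, and, moreover, $\phi(C^{\perp_h})$ coincides with the trace-symplectic dual $\phi(C)^{\perp_s}$ in $\F_q^{2n}$. Because $\phi$ preserves the number of nonzero coordinate pairs, the Hamming weight on $\F_{q^2}^n$ pushes forward to the symplectic weight on $\F_q^{2n}$.

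Having transported the data to the symplectic setting, I would invoke the classical stabilizer construction (as in Ashikhmin--Knill or Ketkar et al.): any additive code $D \subseteq \F_q^{2n}$ of size $q^{n-k'}$ that is trace-symplectic self-orthogonal produces a stabilizer quantum code of dimension $q^{k'}$ whose minimum distance is at least the minimum symplectic weight of vectors in $D^{\perp_s}\setminus D$ (or of $D^{\perp_s}$, for a pure code). Applied to $D=\phi(C)$, this yields a code of length $n$, dimension $n-2k$, and minimum distance at least $\dis(C^{\perp_h}) = d^{\perp_h}$, exactly as claimed. The only genuine obstacle is the isometry lemma above; once that is in place, the dimension count is immediate and the distance bound is inherited from the general additive-symplectic theorem.
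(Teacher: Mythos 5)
The paper itself does not prove this theorem; it is quoted from \cite{Aly,Ketkar}, so your proposal has to be judged against the standard Hermitian-construction argument from those references --- which is indeed the route you chose: transport $C$ to $\F_q^{2n}$ and invoke the additive/symplectic stabilizer correspondence. The architecture is right, but your key translation lemma is false as stated. Writing $x_i=a_i+\alpha b_i$, $y_i=c_i+\alpha d_i$ with $\alpha^q=-\alpha$, one gets $x_iy_i^q=(a_ic_i-\alpha^2 b_id_i)+\alpha(b_ic_i-a_id_i)$, and since $\operatorname{Tr}_{\F_{q^2}/\F_q}(\alpha)=0$ this gives $\operatorname{Tr}_{\F_{q^2}/\F_q}(\boldsymbol{x}\cdot_h\boldsymbol{y})=2\sum_i(a_ic_i-\alpha^2b_id_i)$: a \emph{symmetric}, non-alternating $\F_q$-bilinear form, not a scalar multiple of the symplectic form. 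A one-line counterexample: for $n=1$ and $\boldsymbol{x}=\boldsymbol{y}=1$ you get $\operatorname{Tr}(1)=2\neq 0$ ($q$ odd), while $\langle\phi(\boldsymbol{x}),\phi(\boldsymbol{x})\rangle_s=0$ because the symplectic form is alternating. The correct identity is that the symplectic form is (a scalar multiple of) the $\alpha$-coefficient of the Hermitian product, i.e.\ $\langle\phi(\boldsymbol{x}),\phi(\boldsymbol{y})\rangle_s=\operatorname{Tr}_{\F_{q^2}/\F_q}\bigl(\gamma\,(\boldsymbol{x}\cdot_h\boldsymbol{y})\bigr)$ for a suitable $\gamma\in\F_{q^2}$ with $\gamma^q=-\gamma$ (e.g.\ $\gamma=-1/(2\alpha)$); equivalently, what matters is the alternating part $\boldsymbol{x}\cdot\boldsymbol{y}^q-\boldsymbol{x}^q\cdot\boldsymbol{y}$, which is exactly the content of Lemma 14 in Ketkar et al. Your proposed proof of the lemma (expand and ``read off the trace'') would, if carried out, reveal precisely this failure.

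The error is local and repairable: all the argument needs is that $\boldsymbol{x}\cdot_h\boldsymbol{y}=0$ forces $\langle\phi(\boldsymbol{x}),\phi(\boldsymbol{y})\rangle_s=0$, which follows from the corrected identity because vanishing of the full $\F_{q^2}$-valued product kills every $\F_q$-linear functional of it; combined with the dimension count ($\dim_{\F_q}\phi(C^{\perp_h})=2(n-k)=\dim_{\F_q}\phi(C)^{\perp_s}$, using that $C$ is $\F_{q^2}$-linear so Hermitian orthogonality is two-sided) this also gives $\phi(C^{\perp_h})=\phi(C)^{\perp_s}$. Your weight identification is correct, and the distance claim follows since the minimum symplectic weight over $D^{\perp_s}\setminus D$ is at least that over $D^{\perp_s}$, which equals $\dis(C^{\perp_h})$. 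One more point to state explicitly when $q=p^e$ is not prime: the stabilizer correspondence is formulated for additive codes self-orthogonal under the \emph{trace}-symplectic form $\operatorname{Tr}_{\F_q/\F_p}$ of the symplectic pairing; because $\phi(C)$ is $\F_q$-linear, symplectic self-orthogonality implies trace-symplectic self-orthogonality and the two duals of $\phi(C)$ coincide, but your write-up silently identifies the two forms and should not.
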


The idea in this paper is to construct codes that satisfy the hypotheses of \Cref{th: quantherm}. In order to do so, we fix a finite field $\F_{q^2}$.  Let $\F_{q^2}[X_1,\dots,X_m]$ be the polynomial ring in $m\geq 1$ variables over $\F_{q^2}$. For each element $\boldsymbol{e}=(e_1,\dots,e_m)\in\N_0^m$, we write $X^{\boldsymbol{e}}$ for $X_1^{e_1}X_2^{e_2}\cdots X_m^{e_m}$. 
We will refer to $\boldsymbol{e}$ as an exponent and use the lexicographic order in $\N_0^m$ for the exponents. That is, given $\boldsymbol{e}$, $\boldsymbol{e'}\in \N_0^m$, we say $\boldsymbol{e}<\boldsymbol{e'}$ if and only if $e_1<e'_1$ or there exists $j\in\{2,\dots, m\}$ such that $e_1 =e'_1, \dots , e_{j-1} = e'_{j-1}$ and $e_j<e'_j$. Any order can be used.

Let $\lambda \in \N$ such that $\lambda \mid q-1$. Let $A_1$ be the set of roots of the polynomial $X_1^{\lambda (q+1)}-1$, which lie in $\F_{q^2}$. We also consider arbitrary subsets $A_j \subseteq \F_{q^2}^*$ for $j=2, \dots, m$ which have cardinality greater than or equal to 2. Let $a_j \coloneqq \# A_j$ for $j=1, \ldots, m$,
so that $a_1=\lambda (q+1)$. Let
\[
    Z \coloneqq A_1  \times \cdots \times A_m,
\]
which has cardinality
$$n:=\prod_{j=1}^m a_j.$$
Let 
$$Q_j(X_j)=\prod_{\beta\in A_j}(X_j - \beta)$$ 
be the monic polynomial in one variable whose roots are the elements of $A_j$, then $\deg(Q_j)=  a_j$ for $j=1,\dots,m$. Let $I$ be the ideal of $\F_{q^2}[X_1,\dots,X_m]$ generated by the polynomials 
$Q_1(X_1)=X_1^{\lambda (q+1)}-1$ and $Q_j(X_j)$ for $j=2, \ldots , m$. Let
$$R:=\faktor{\F_{q^2}[X_1,\dots,X_m]}{I}$$
and let
$$E:=\{0,1,\dots,a_1-1\}\times \cdots \times \{0,1,\dots,a_m-1\}.$$ 
Given $f\in R$, in this paper $f$ is going to denote both the equivalence class in $R$ and the unique polynomial representing $f$ in $\F_{q^2}[X_1,\dots,X_m]$ with degree in $X_j$ less than $a_j$, $1\leq j \leq m$. 
Thus, one can write any $f\in R$ uniquely as
$$f(X_1,\dots,X_m)=\sum_{(e_1,\dots,e_m)\in E} f_{e_1,\dots,e_m}X_1^{e_1}\cdots X_m^{e_m},$$
with $f_{e_1,\dots,e_m}\in\F_{q^2}$. 
Let us denote $\supp(f)=\{(e_1,\dots,e_m)\in E \mid f_{e_1,\dots,e_m}\neq 0\}$. For each nonempty subset $\Delta \subseteq E$, define $V_\Delta:=\{f\in R \mid \supp(f)\subseteq \Delta\}$. Then, $V_\Delta$ is the $\mathbb{F}_{q^2}$-vector space $\Span\{ X^{\boldsymbol{e}} \mid \boldsymbol{e}\in \Delta\}$.

For any positive integer $t$, we denote by $\zeta_t$ a primitive $t$-th root of unity. Since $A_j$ has $a_j$ elements, we choose a bijection between $A_j$ and the set $\{0,1, \dots , a_j-1\}$ and this is going to give us an ordering of $A_j$, $j=2,\dots,m$. 
Let us represent by $\xi_{(j,s)}$ the elements of each set $A_j$, where the subindex  $s\in \{0,1, \dots , a_j-1\}$ is given by the  ordering.
For
$\boldsymbol{\alpha} = (\alpha_1, \dots, \alpha_m) \in E$ we define
 $\boldsymbol{P_{\alpha}} \in Z$ by
\[
    \boldsymbol{P_{\alpha}}\coloneqq (\zeta_{\lambda(q+1)}^{\alpha_1}, \xi_{(1,\alpha_2)}, \dots \xi_{(m,\alpha_m)}),
\]
where $\alpha_1$ indicates the exponent of $\zeta_{\lambda(q+1)}$ and $\alpha_j \in \{0,1, \ldots ,a_j-1\}$ gives the position of the element $\xi_{(j,\alpha_j)}\in A_j$
in the ordering of $A_j$, $j=2,\dots,m$.
Every element of $Z$ has the form $\boldsymbol{P_{\alpha}}$ for some $\boldsymbol{\alpha}\in E$.
This sets up a bijection between $Z$ and $E$.

We order the set $Z$ using the (lexicographic) order in $\N_0^m$ restricted to $E$. 
That is, given  $\boldsymbol{P_\alpha}$, $\boldsymbol{P_{\alpha'}} \in Z$, then $\boldsymbol{P_\alpha} < \boldsymbol{P_{\alpha'}}$ if and only if $\boldsymbol{\alpha}<\boldsymbol{\alpha'}$. 
Then, we can rename the points in $Z$ as 
$$\boldsymbol{P_0}:= \boldsymbol{P}_{(0,\dots,0)}, \boldsymbol{P_1}:= \boldsymbol{P}_{(0,\dots,0,1)}, \dots , \boldsymbol{P_{n-1}}:= \boldsymbol{P}_{(a_1-1,a_2-1,\dots,a_m-1)}.$$ 

Let $\boldsymbol{v}=(v_0,\dots,v_{n-1})\in(\F^*_{q^2})^{n}$,  we will refer 
to this vector as the \textit{twist vector}. 
We index the coordinates of $\boldsymbol{v}$ by the elements of $E$,
and we  order the coordinates of $\boldsymbol{v}$ 
in the same way as we ordered the elements of $Z$.
That is,
$$v_0:=v_{(0,\dots,0)}, v_1:= v_{(0,\dots,0,1)}, \dots , v_{n-1}:= v_{(a_1-1,a_2-1,\dots,a_m-1)}.$$ 
The linear evaluation map in $Z$:
$$
\ev_{\boldsymbol{v},Z}\colon R \longrightarrow \F_{q^2}^{n} \textrm{, } \quad 
\ev_{\boldsymbol{v},Z}(f)=\left(v_0 f(\boldsymbol{P_0}),\dots,v_{n-1}f(\boldsymbol{P_{n-1}})\right)
$$
is injective by the definition  of $R$.
It provides the following class of evaluation codes.

\begin{definition}\label{gmcc}
The \textit{generalized monomial-Cartesian code} (\textit{GMCC}) $C_{\boldsymbol{v},\Delta,Z}$ is the image of $V_\Delta$ via the evaluation map $\ev_{\boldsymbol{v},Z}$, that is
$$C_{\boldsymbol{v},\Delta,Z}:=\ev_{\boldsymbol{v},Z}(V_\Delta)=\Span\{\ev_{\boldsymbol{v},Z}(X^{\boldsymbol{e}}) \mid \boldsymbol{e}\in \Delta\} \subseteq \mathbb{F}_{q^2}^{n}.$$
\end{definition}

Since the order of the set $Z$ will be fixed for the rest of the article, we will use the notation $\ev_{\boldsymbol{v}} \coloneqq \ev_{\boldsymbol{v},Z}$ and $C_{\boldsymbol{v},\Delta}\coloneqq C_{\boldsymbol{v},\Delta, Z}$.

\begin{remark}
Evaluation maps of our codes are defined on subsets
of coordinate rings of certain affine varieties, but these codes can also be introduced with algebraic tools. Monomial-Cartesian codes were introduced in \cite{LMS2020} using only algebraic tools.
When the set $A_1\subseteq \mathbb{F}_{q^2}$ is arbitrary, GMMCs extend monomial-Cartesian codes. This should be the accurate definition, but for our purposes in this paper we use this particular set $A_1$, namely the $\lambda (q+1)$-th roots of unity.
\end{remark}

Here is a standard fact, that the dual of a GMCC is another GMCC.

\begin{lemma}\label{prop:ClosedByDuality}
    The dual code $(C_{\boldsymbol{v},\Delta})^{\perp_h}$ is a GMCC $C_{\boldsymbol{w},\Delta}$ for some twist vector $\boldsymbol{w}$. 
\end{lemma}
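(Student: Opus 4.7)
The plan is to produce an explicit twist vector $\boldsymbol{w}$ realizing $C_{\boldsymbol{w},\Delta}=(C_{\boldsymbol{v},\Delta})^{\perp_h}$, by routing through Euclidean duality and the Frobenius automorphism of $\F_{q^2}^*$. The key observation is $C^{\perp_h}=(C^q)^{\perp_e}$, where $C^q$ denotes the coordinate-wise $q$-th power. In characteristic $p$, the map $\boldsymbol{a}\mapsto\boldsymbol{a}^q$ is an $\F_{q^2}$-linear isomorphism on codes (from $(a+b)^q=a^q+b^q$ together with $\lambda\boldsymbol{a}^q=(\lambda^{1/q}\boldsymbol{a})^q$), and applied to a GMCC it yields another GMCC with Frobenius-transformed twist and exponent data; so the problem reduces to understanding the Euclidean dual and then inverting the $q$-th power.

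Next, I would establish a Euclidean analogue of the lemma: the Euclidean dual of a GMCC on the Cartesian point set $Z$ is again a GMCC on $Z$ with the same exponent set, for an appropriate canonical dual twist. This rests on factoring the bilinear pairing
\[
\ev_{\boldsymbol{v}}(X^{\boldsymbol{e}})\cdot_e\ev_{\boldsymbol{u}}(X^{\boldsymbol{e}'})=\sum_{\boldsymbol{\alpha}\in E}v_{\boldsymbol{\alpha}}\,u_{\boldsymbol{\alpha}}\,\boldsymbol{P_{\alpha}}^{\boldsymbol{e}+\boldsymbol{e}'}
\]
as a product of one-variable character sums over the $A_j$. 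Since $A_1$ is the full set of $\lambda(q+1)$-th roots of unity, the first factor vanishes unless $\lambda(q+1)\mid e_1+e'_1$, and analogous vanishing criteria follow for each $A_j$ from Newton's identities applied to the roots of $Q_j$. Setting $v_{\boldsymbol{\alpha}}u_{\boldsymbol{\alpha}}=c_{\boldsymbol{\alpha}}$ for an explicit weight $c_{\boldsymbol{\alpha}}\in\F_{q^2}^*$ (for instance, built from $\prod_j 1/Q_j'(P_{\alpha,j})$) uniquely determines $\boldsymbol{u}\in(\F_{q^2}^*)^n$ and forces $C_{\boldsymbol{u},\Delta}\subseteq(C_{\boldsymbol{v},\Delta})^{\perp_e}$.

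Finally, I would set $w_{\boldsymbol{\alpha}}:=u_{\boldsymbol{\alpha}}^{1/q}\in\F_{q^2}^*$ (using that $x\mapsto x^q$ is an automorphism of $\F_{q^2}^*$), reversing the $q$-th power isomorphism, to recover $C_{\boldsymbol{w},\Delta}=(C_{\boldsymbol{v},\Delta})^{\perp_h}$. The main obstacle is keeping the same exponent set $\Delta$ on both sides: this requires that the $q$-th power built into the Hermitian form, combined with the Frobenius action $\boldsymbol{e}'\mapsto q\boldsymbol{e}'$ on exponents modulo the $a_j$, be compatible with $\Delta$ set-theoretically, and that the dimensions match, which forces a combinatorial condition on $\Delta$ (essentially $|\Delta|=n-|\Delta|$) to be in effect at the point in the paper where this lemma is invoked. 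The technical heart of the proof is thus the Cartesian character-sum identity together with the explicit choice of $c_{\boldsymbol{\alpha}}$ that makes everything align, after which inclusion plus dimension count gives equality.
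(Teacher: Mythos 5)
Your reduction $C^{\perp_h}=(C^q)^{\perp_e}$ is correct, but the heart of your argument has a genuine gap. First, the claim that the Euclidean dual of $C_{\boldsymbol{v},\Delta}$ is a GMCC with the \emph{same} exponent set, obtained by forcing $v_{\boldsymbol{\alpha}}u_{\boldsymbol{\alpha}}=c_{\boldsymbol{\alpha}}$ with $c_{\boldsymbol{\alpha}}=\prod_j 1/Q_j'(P_{\boldsymbol{\alpha},j})$, cannot work: dimensionally it would require $\#\Delta=n-\#\Delta$, and even the inclusion $C_{\boldsymbol{u},\Delta}\subseteq(C_{\boldsymbol{v},\Delta})^{\perp_e}$ is false in general, because with those weights the pairing of $X^{\boldsymbol{e}}$ and $X^{\boldsymbol{e}'}$ factors as $\prod_j\sum_{\xi\in A_j}\xi^{e_j+e_j'}/Q_j'(\xi)$, and the $j$-th factor is $0$ only for $e_j+e_j'\le a_j-2$ (it equals $1$ at $e_j+e_j'=a_j-1$); so it already fails if $\Delta$ contains a pair with $e_j+e_j'=a_j-1$ for all $j$. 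What is true is that the dual exponent set is a reflected complement of $\Delta$, not $\Delta$. Your fallback, that $\#\Delta=n/2$ "is in effect where the lemma is invoked," is not available: the lemma feeds \Cref{le:isometricDual} and \Cref{re:TrickForDistance}, which are applied to arbitrary $\Delta\subseteq E_0$, in particular to the small sets $\Delta_t$ of \Cref{main2}. (Also, the power sums $\sum_{\xi\in A_j}\xi^s$ over an arbitrary subset $A_j\subseteq\F_{q^2}^*$ do not vanish, so "Newton's identities" give no vanishing criterion without the $1/Q_j'$ weights.) Second, the Frobenius step is unjustified in this setting: for $j\ge2$ the $A_j$ are arbitrary, so $\boldsymbol{c}\mapsto\boldsymbol{c}^q$ turns evaluations at $Z$ into evaluations of $f^{[q]}$ at $Z^q\neq Z$, and $X_j^{qe_j}\bmod Q_j(X_j)$ need not be a monomial; hence "$C^q$ is another GMCC with Frobenius-transformed twist and exponent data" on the same coordinate set does not follow, and setting $w_{\boldsymbol{\alpha}}=u_{\boldsymbol{\alpha}}^{1/q}$ conflates $(C^q)^{\perp_e}$ with $(C^{\perp_e})^q$.

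The paper's proof avoids all of this machinery, and the repetition of the symbol $\Delta$ in the statement should be read loosely as "the Hermitian dual of a GMCC is again a GMCC (a twist of the dual of the untwisted code)", which is the only content used later. Concretely: put $w_i=v_i^{-q}$; since $v_i\in\F_{q^2}^*$ gives $v_i^{q^2}=v_i$, one has $v_iw_i^q=1$, hence
\[
(\boldsymbol{v}*\boldsymbol{c})\cdot_h(\boldsymbol{w}*\boldsymbol{a})=\sum_i v_i w_i^q\, c_i a_i^q=\boldsymbol{c}\cdot_h\boldsymbol{a}=0
\]
for all $\boldsymbol{c}\in C_{\boldsymbol{1},\Delta}$ and $\boldsymbol{a}\in(C_{\boldsymbol{1},\Delta})^{\perp_h}$, and since $\boldsymbol{a}\mapsto\boldsymbol{w}*\boldsymbol{a}$ is injective, a dimension count gives $(C_{\boldsymbol{v},\Delta})^{\perp_h}=\boldsymbol{w}*(C_{\boldsymbol{1},\Delta})^{\perp_h}$. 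No character sums, no $1/Q_j'$ weights, and no analysis of the sets $A_j$ are needed; if you want the dual's exponent set explicitly you must additionally invoke the standard description of $(C_{\boldsymbol{1},\Delta})^{\perp}$ as a monomial-Cartesian code, but that is not required for the way the lemma is used.
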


\begin{proof}
Assume that $\boldsymbol{v}=(v_0,\dots,v_{n-1})$. Consider the two codewords $\boldsymbol{c}=(c_0,\ldots,c_{n-1})\in C_{\boldsymbol{1},\Delta}$ and $\boldsymbol{a}=(a_0,\ldots,a_{n-1})\in (C_{\boldsymbol{1},\Delta})^{\perp_h}$. Then, the following equation holds:
\begin{equation}\label{eq:condition1}
c_0a_0^q+\cdots+c_{n-1}a_{n-1}^q=0.
\end{equation}
Let us see that $(C_{\boldsymbol{v},\Delta})^{\perp_h}=C_{\boldsymbol{w},\Delta}$ for some $\boldsymbol{w}=(w_0,\dots,w_{n-1})$. To do so, let us find a particular $\boldsymbol{w}$ satisfying this. 
We know  that $\boldsymbol{v}*\boldsymbol{c}=(v_0c_0,\ldots,v_{n-1}c_{n-1})\in C_{\boldsymbol{v},\Delta}$ whenever $\boldsymbol{c}=(c_0,\ldots,c_{n-1})\in C_{\boldsymbol{1},\Delta}$. Moreover, we claim that $\boldsymbol{w}*\boldsymbol{a}=(w_0a_0,\ldots,w_{n-1}a_{n-1})\in (C_{\boldsymbol{v},\Delta})^{\perp_h}$.
To that, it is necessary that 
$$
v_0w_0^qc_0 a_0^q+\cdots+v_{n-1}w_{n-1}^q c_{n-1} a_{n-1}^q=0.
$$
Observe that if $w_i=\frac{1}{v_i^q}$ for all $i=0,\ldots,n-1$, then \eqref{eq:condition1} holds. With this, $$\left(\frac{1}{v_0^q}a_0,\ldots,\frac{1}{v_{n-1}^q}a_{n-1}\right)\in(C_{\boldsymbol{v},\Delta})^{\perp_h}$$ and from the fact that
$$C_{\boldsymbol{1},\Delta} \to C_{\boldsymbol{v},\Delta}, \quad \boldsymbol{c}\mapsto \boldsymbol{v}*\boldsymbol{c}$$
and
$$(C_{\boldsymbol{1},\Delta})^{\perp_h} \to (C_{\boldsymbol{v},\Delta})^{\perp_h}, \quad \boldsymbol{a}\mapsto \boldsymbol{w}*\boldsymbol{a}$$
are bijective mappings, then
$(C_{\boldsymbol{v},\Delta})^{\perp_h}=C_{\boldsymbol{w},\Delta}$.
\end{proof}

The length and the dimension of a GMCC are $n$ and $\#\Delta$, respectively. A bound for the minimum distance is provided in the forthcoming \Cref{cdist}.

\begin{lemma}\label{le:isometricPrimal}
The GMCCs 
$C_{\boldsymbol{1},\Delta}$ and $C_{\boldsymbol{v},\Delta}$ are isometric.
\end{lemma}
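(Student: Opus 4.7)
The plan is to exhibit the obvious coordinate-wise multiplication map and check that it gives an isometry. Define the $\mathbb{F}_{q^2}$-linear map
\[
\phi \colon C_{\boldsymbol{1},\Delta} \longrightarrow C_{\boldsymbol{v},\Delta}, \qquad \boldsymbol{c} \longmapsto \boldsymbol{v} * \boldsymbol{c}.
\]
First I would verify that $\phi$ is well defined and surjective. By construction, if $\boldsymbol{c} = \ev_{\boldsymbol{1}}(f)$ for some $f \in V_\Delta$, then $\boldsymbol{v} * \boldsymbol{c} = \ev_{\boldsymbol{v}}(f)$, which lies in $C_{\boldsymbol{v},\Delta}$; conversely every element of $C_{\boldsymbol{v},\Delta}$ is of this form, so $\phi$ is surjective onto $C_{\boldsymbol{v},\Delta}$.

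Next I would check injectivity. Since every coordinate $v_i$ lies in $\F_{q^2}^*$, the componentwise product $\boldsymbol{v}*\boldsymbol{c} = \boldsymbol{0}$ forces $\boldsymbol{c}=\boldsymbol{0}$. Hence $\phi$ is a bijection between $C_{\boldsymbol{1},\Delta}$ and $C_{\boldsymbol{v},\Delta}$.

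Finally I would verify that $\phi$ preserves the Hamming weight. For any $\boldsymbol{c}=(c_0,\dots,c_{n-1})$, one has $(v_ic_i)=0$ if and only if $c_i=0$, again because $v_i\neq 0$. Therefore the supports of $\boldsymbol{c}$ and $\boldsymbol{v}*\boldsymbol{c}$ coincide, and in particular $\w(\phi(\boldsymbol{c}))=\w(\boldsymbol{c})$. Thus $\phi$ is a weight-preserving bijection, which is exactly the definition of an isometry. There is no real obstacle here; the statement is essentially a rephrasing of the fact that multiplying by a vector with nonzero entries is a monomial transformation and hence an isometry of codes.
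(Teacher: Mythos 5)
Your proposal is correct and is essentially the same argument as the paper's: both use the bijection $\boldsymbol{c}\mapsto \boldsymbol{v}*\boldsymbol{c}$ from $C_{\boldsymbol{1},\Delta}$ to $C_{\boldsymbol{v},\Delta}$ and note that weights are preserved because every $v_i$ is nonzero. You merely spell out the well-definedness, surjectivity and injectivity checks that the paper leaves implicit.
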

\begin{proof}
For any codeword $\boldsymbol{c}=(c_0,\ldots,c_{n-1})\in C_{\boldsymbol{1},\Delta}$, its twisted analogue codeword $\boldsymbol{v}*\boldsymbol{c}=(v_0c_0,\ldots,v_{n-1}c_{n-1})\in C_{\boldsymbol{v},\Delta}$ under the bijective mapping  $C_{\boldsymbol{1},\Delta} \to C_{\boldsymbol{v},\Delta}$, $\boldsymbol{c}\mapsto \boldsymbol{v}*\boldsymbol{c}$ has the same Hamming weight, this is because $v_i\ne 0$ for all $i=1,\dots,n$.
\end{proof}

Affine variety codes admit a bound on the minimum distance, known as the footprint bound \cite{OH}. Monomial-Cartesian codes $C_{\boldsymbol{1},\Delta}$ in the sense of our \Cref{gmcc} (the evaluation map is defined over the coordinate ring of some affine variety) are affine variety codes. This fact and \Cref{le:isometricPrimal} proves next lemma, stating that this bound is also valid for GMCCs. For every exponent $\boldsymbol{e}\in E$, we define
$$\Dis(\boldsymbol{e}):=\prod_{j=1}^m (a_j-e_j).$$

\begin{lemma}\label{footprint}
Let $C_{\boldsymbol{v},\Delta}$ be a GMCC and let $\boldsymbol{c}=\ev_{\boldsymbol{v}}(f)\in C_{\boldsymbol{v},\Delta}$ be a codeword, $f\in R$. Fix a monomial ordering on $(\N_0)^m$ and let $X^{\boldsymbol{e}}$ be the leading monomial of $f$. Then, $\w(\boldsymbol{c})\geq \Dis(\boldsymbol{e})$.
\end{lemma}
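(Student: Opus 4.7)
The plan is to reduce to the untwisted case and then invoke the footprint bound for affine variety codes. Since $\ev_{\boldsymbol{v}}(f)=\boldsymbol{v}*\ev_{\boldsymbol{1}}(f)$ and each $v_i\neq 0$, \Cref{le:isometricPrimal} (or the direct observation underlying it) yields $\w(\ev_{\boldsymbol{v}}(f))=\w(\ev_{\boldsymbol{1}}(f))$, so it is enough to show $\w(\ev_{\boldsymbol{1}}(f))\ge\Dis(\boldsymbol{e})$. Note that this reduction is exactly the point of having introduced the twist vector only as a multiplicative perturbation: it does not affect Hamming weights.

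For the untwisted bound, the first observation is that $Q_1(X_1),\ldots,Q_m(X_m)$ involve pairwise disjoint variables, hence form a Groebner basis of $I$ for every monomial order, with leading monomials $X_1^{a_1},\ldots,X_m^{a_m}$. The associated footprint of $I$ is exactly $\{X^{\boldsymbol{d}}\mid\boldsymbol{d}\in E\}$, which has cardinality $n$, and the vanishing set of $I$ is precisely $Z$. Setting $N(f):=\#\{\boldsymbol{P}\in Z\mid f(\boldsymbol{P})=0\}$, we obtain $\w(\ev_{\boldsymbol{1}}(f))=n-N(f)$. I would then invoke the footprint bound from \cite{OH}: the number of common zeros of a zero-dimensional ideal $J$ is at most the cardinality of its footprint. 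Applied to $J=\langle Q_1,\ldots,Q_m,f\rangle$, any Groebner basis of $J$ obtained by completing $\{Q_1,\ldots,Q_m,f\}$ has leading monomials including $X_1^{a_1},\ldots,X_m^{a_m}$ together with a divisor of $X^{\boldsymbol{e}}$, so every multiple of $X^{\boldsymbol{e}}$ is excluded from the footprint of $J$. A direct count shows that the multiples of $X^{\boldsymbol{e}}$ within $E$ are exactly the exponents $\boldsymbol{d}$ with $e_j\le d_j<a_j$ for every $j$, giving $\prod_{j=1}^m(a_j-e_j)=\Dis(\boldsymbol{e})$ monomials. Hence the footprint of $J$ has cardinality at most $n-\Dis(\boldsymbol{e})$, so $N(f)\le n-\Dis(\boldsymbol{e})$ and the desired inequality follows.

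The main subtlety to verify is that some divisor of $X^{\boldsymbol{e}}$ genuinely persists as a leading monomial of a Groebner basis of $J$. This will hold because every exponent appearing in $f$ lies in $E$ (by our convention for representatives of classes in $R$), so the reduction of $f$ modulo $\{Q_1,\ldots,Q_m\}$ is $f$ itself, and $f$ is nonzero in $R$; consequently, running Buchberger's completion starting from $\{Q_1,\ldots,Q_m,f\}$ retains a polynomial whose leading monomial divides $X^{\boldsymbol{e}}$, which is all the counting argument requires. No other step presents difficulty: the footprint bound is classical, the Groebner basis structure of $I$ is transparent from its generators, and the cardinality count of multiples of $X^{\boldsymbol{e}}$ in the box $E$ is immediate.
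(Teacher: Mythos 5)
Your proof is correct and takes essentially the same route as the paper: reduce to the untwisted code via the weight-preserving map $\boldsymbol{c}\mapsto\boldsymbol{v}*\boldsymbol{c}$ of \Cref{le:isometricPrimal}, then apply the footprint bound for affine variety codes, which the paper simply cites from \cite{OH} while you reprove it by counting standard monomials of $J=\langle Q_1,\dots,Q_m,f\rangle$. Your final worry about Buchberger completion is unnecessary: since $f\in J$ and $X^{\boldsymbol{e}}=\mathrm{LM}(f)$, the monomial $X^{\boldsymbol{e}}$ lies in the leading-term ideal of $J$, so every multiple of it is automatically excluded from the footprint, which is all the count requires.
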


\begin{corollary}\label{cdist}
Let $C_{\boldsymbol{v},\Delta}$ be a GMCC and let $d$ be its minimum distance. Define $d_0=d_0\left(C_{\boldsymbol{v},\Delta}\right):=\min\{\Dis(\boldsymbol{e}) \mid \boldsymbol{e}\in \Delta\}$. Then, $d\geq d_0$.
\end{corollary}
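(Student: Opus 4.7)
The plan is to deduce the bound $d \ge d_0$ directly from Lemma \ref{footprint} by a minimization over the support of $\Delta$. The key observation is that the leading monomial of any $f \in V_\Delta$ must have its exponent in $\Delta$, because $V_\Delta$ is defined as the span of the monomials $X^{\boldsymbol{e}}$ with $\boldsymbol{e} \in \Delta$, so $\supp(f) \subseteq \Delta$.

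First I would take an arbitrary nonzero codeword $\boldsymbol{c} \in C_{\boldsymbol{v},\Delta}$. Since the evaluation map $\ev_{\boldsymbol{v}}\colon R \to \F_{q^2}^n$ is injective on $R$ (and in particular on $V_\Delta$), there exists a unique $f \in V_\Delta$ with $\boldsymbol{c} = \ev_{\boldsymbol{v}}(f)$, and $f \ne 0$. Fix any monomial ordering on $\N_0^m$ (for instance the lexicographic order already used in the paper) and let $X^{\boldsymbol{e}^*}$ be the leading monomial of $f$. Then $\boldsymbol{e}^* \in \supp(f) \subseteq \Delta$.

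By Lemma \ref{footprint}, $\w(\boldsymbol{c}) \ge \Dis(\boldsymbol{e}^*)$. By definition of $d_0$ as a minimum over $\Delta$, and because $\boldsymbol{e}^* \in \Delta$, we have $\Dis(\boldsymbol{e}^*) \ge d_0$. Combining,
\[
    \w(\boldsymbol{c}) \ge \Dis(\boldsymbol{e}^*) \ge d_0.
\]
Since $\boldsymbol{c}$ was an arbitrary nonzero codeword, taking the minimum over all such $\boldsymbol{c}$ yields $d \ge d_0$, as required.

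There is no real obstacle here; the corollary is essentially a one-line consequence of Lemma \ref{footprint} once one notices that the leading exponent of any $f \in V_\Delta$ belongs to $\Delta$. The only small point to be careful about is ensuring that the preimage $f$ of a nonzero codeword lies in $V_\Delta$ (not merely in $R$), which follows from the injectivity of $\ev_{\boldsymbol{v}}$ restricted to $V_\Delta$.
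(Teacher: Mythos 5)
Your argument is correct and is exactly the route the paper intends: the corollary is stated as an immediate consequence of \Cref{footprint}, using that any nonzero codeword is $\ev_{\boldsymbol{v}}(f)$ for a unique nonzero $f\in V_\Delta$, whose leading exponent lies in $\Delta$, so its weight is at least $\min\{\Dis(\boldsymbol{e})\mid \boldsymbol{e}\in\Delta\}=d_0$. Nothing is missing.
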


\begin{remark}
Affine variety codes were introduced  in \cite{fit} for any ideal  $I$.  A classical result coming from the theory of Gröbner  basis  \cite{CoxLittle} implies that  $d\geq d_0$, where $d$ stands for the minimum distance of an affine variety code and $d_0$ is the cited footprint bound \cite{OH}. Independently, inspired by the algebraic geometric codes \cite{H-VL-P} the so called Feng-Rao bound for the minimum distance of the dual code is derived \cite{FengRao}. It is known that every linear code is an algebraic geometric code. A similar bound (Andersen-Geil) was also given for an algebraic geometric code \cite{ANDERSEN200892}. It turns out that for monomial-Cartesian codes the footprint bound applied to the dual code and the Feng-Rao bound coincide \cite{GGHR}. Although the footprint bound is more natural for the primal code, and the Feng-Rao bound is more natural for the dual code, we will always refer to them as $d_0$.

\end{remark}

For $m=2$, we can use a grid to represent the set $E$ so that an exponent $\boldsymbol{e}=(e_1,e_2)\in E$ correspond to coordinates $(e_1,e_2)$ in the grid, and it is labelled with $\Dis(\boldsymbol{e})$. In addition, exponents in a set $\Delta\subseteq E$ are coloured in blue. 
\begin{example}
\Cref{fig:grid} shows the case when $a_1=8$, $a_2=6$ and $\Delta=\left(\{0,1,2\}\times\{0,1\} \right) \cup \{(0,2),(1,2)\}$. In this example, the lower bound for the minimum distance of the code $C_{\boldsymbol{v},\Delta}$ for any $\boldsymbol{v}\in(\F_{q^2}^*)^n$ is $d_0\left(C_{\boldsymbol{v},\Delta}\right) =\min\{\Dis(\boldsymbol{e}) \mid \boldsymbol{e}\in \Delta\} = 28$ by \Cref{cdist}.

\begin{figure}[h]
    \centering
    \begin{tikzpicture}[y=0.7cm, x=0.7cm,font=\normalsize]
    \draw (0,0) -- (7,0);
    \draw (0,0) -- (0,5);

    \filldraw[fill=blue!40,draw=black!80] (0,0) circle (3pt)    node[anchor=south] {\scriptsize$48$};
    \filldraw[fill=blue!40,draw=black!80] (1,0) circle (3pt)    node[anchor=south] {\scriptsize$42$};
    \filldraw[fill=blue!40,draw=black!80] (2,0) circle (3pt)    node[anchor=south] {\scriptsize$36$};
    \filldraw[fill=black!40,draw=black!80] (3,0) circle (1pt)    node[anchor=south] {\scriptsize$30$};
    \filldraw[fill=black!40,draw=black!80] (4,0) circle (1pt)    node[anchor=south] {\scriptsize$24$};
    \filldraw[fill=black!40,draw=black!80] (5,0) circle (1pt)    node[anchor=south] {\scriptsize$18$};
    \filldraw[fill=black!40,draw=black!80] (6,0) circle (1pt)    node[anchor=south] {\scriptsize$12$};
    \filldraw[fill=black!40,draw=black!80] (7,0) circle (1pt)    node[anchor=south] {\scriptsize$6$};
    \filldraw[fill=blue!40,draw=black!80] (0,1) circle (3pt)    node[anchor=south] {\scriptsize$40$};
    \filldraw[fill=blue!40,draw=black!80] (1,1) circle (3pt)    node[anchor=south] {\scriptsize$35$};
    \filldraw[fill=blue!40,draw=black!80] (2,1) circle (3pt)    node[anchor=south] {\scriptsize$30$};
    \filldraw[fill=black!40,draw=black!80] (3,1) circle (1pt)    node[anchor=south] {\scriptsize$25$};
    \filldraw[fill=black!40,draw=black!80] (4,1) circle (1pt)    node[anchor=south] {\scriptsize$20$};
    \filldraw[fill=black!40,draw=black!80] (5,1) circle (1pt)    node[anchor=south] {\scriptsize$15$};
    \filldraw[fill=black!40,draw=black!80] (6,1) circle (1pt)    node[anchor=south] {\scriptsize$10$};
    \filldraw[fill=black!40,draw=black!80] (7,1) circle (1pt)    node[anchor=south] {\scriptsize$5$};
    \filldraw[fill=blue!40,draw=black!80] (0,2) circle (3pt)    node[anchor=south] {\scriptsize$32$};
    \filldraw[fill=blue!40,draw=black!80] (1,2) circle (3pt)    node[anchor=south] {\scriptsize$28$};
    \filldraw[fill=black!40,draw=black!80] (2,2) circle (1pt)    node[anchor=south] {\scriptsize$24$};
    \filldraw[fill=black!40,draw=black!80] (3,2) circle (1pt)    node[anchor=south] {\scriptsize$20$};
    \filldraw[fill=black!40,draw=black!80] (4,2) circle (1pt)    node[anchor=south] {\scriptsize$16$};
    \filldraw[fill=black!40,draw=black!80] (5,2) circle (1pt)    node[anchor=south] {\scriptsize$12$};
    \filldraw[fill=black!40,draw=black!80] (6,2) circle (1pt)    node[anchor=south] {\scriptsize$8$};
    \filldraw[fill=black!40,draw=black!80] (7,2) circle (1pt)    node[anchor=south] {\scriptsize$4$};
    \filldraw[fill=black!40,draw=black!80] (0,3) circle (1pt)    node[anchor=south] {\scriptsize$24$};
    \filldraw[fill=black!40,draw=black!80] (1,3) circle (1pt)    node[anchor=south] {\scriptsize$21$};
    \filldraw[fill=black!40,draw=black!80] (2,3) circle (1pt)    node[anchor=south] {\scriptsize$18$};
    \filldraw[fill=black!40,draw=black!80] (3,3) circle (1pt)    node[anchor=south] {\scriptsize$15$};
    \filldraw[fill=black!40,draw=black!80] (4,3) circle (1pt)    node[anchor=south] {\scriptsize$12$};
    \filldraw[fill=black!40,draw=black!80] (5,3) circle (1pt)    node[anchor=south] {\scriptsize$9$};
    \filldraw[fill=black!40,draw=black!80] (6,3) circle (1pt)    node[anchor=south] {\scriptsize$6$};
    \filldraw[fill=black!40,draw=black!80] (7,3) circle (1pt)    node[anchor=south] {\scriptsize$3$};
    \filldraw[fill=black!40,draw=black!80] (0,4) circle (1pt)    node[anchor=south] {\scriptsize$16$};
    \filldraw[fill=black!40,draw=black!80] (1,4) circle (1pt)    node[anchor=south] {\scriptsize$14$};
    \filldraw[fill=black!40,draw=black!80] (2,4) circle (1pt)    node[anchor=south] {\scriptsize$12$};
    \filldraw[fill=black!40,draw=black!80] (3,4) circle (1pt)    node[anchor=south] {\scriptsize$10$};
    \filldraw[fill=black!40,draw=black!80] (4,4) circle (1pt)    node[anchor=south] {\scriptsize$8$};
    \filldraw[fill=black!40,draw=black!80] (5,4) circle (1pt)    node[anchor=south] {\scriptsize$6$};
    \filldraw[fill=black!40,draw=black!80] (6,4) circle (1pt)    node[anchor=south] {\scriptsize$4$};
    \filldraw[fill=black!40,draw=black!80] (7,4) circle (1pt)    node[anchor=south] {\scriptsize$2$};
    \filldraw[fill=black!40,draw=black!80] (0,5) circle (1pt)    node[anchor=south] {\scriptsize$8$};
    \filldraw[fill=black!40,draw=black!80] (1,5) circle (1pt)    node[anchor=south] {\scriptsize$7$};
    \filldraw[fill=black!40,draw=black!80] (2,5) circle (1pt)    node[anchor=south] {\scriptsize$6$};
    \filldraw[fill=black!40,draw=black!80] (3,5) circle (1pt)    node[anchor=south] {\scriptsize$5$};
    \filldraw[fill=black!40,draw=black!80] (4,5) circle (1pt)    node[anchor=south] {\scriptsize$4$};
    \filldraw[fill=black!40,draw=black!80] (5,5) circle (1pt)    node[anchor=south] {\scriptsize$3$};
    \filldraw[fill=black!40,draw=black!80] (6,5) circle (1pt)    node[anchor=south] {\scriptsize$2$};
    \filldraw[fill=black!40,draw=black!80] (7,5) circle (1pt)    node[anchor=south] {\scriptsize$1$};

    \node [below] at (0,0) {\scriptsize$0$};
    \node [below] at (1,0) {\scriptsize$1$};
    \node [below] at (2,0) {\scriptsize$2$};
    \node [below] at (3,0) {\scriptsize$3$};
    \node [below] at (4,0) {\scriptsize$4$};
    \node [below] at (5,0) {\scriptsize$5$};
    \node [below] at (6,0) {\scriptsize$6$};
    \node [below] at (7,0) {\scriptsize$7$};
    \node [left] at (0,0) {\scriptsize$0$};
    \node [left] at (0,1) {\scriptsize$1$};
    \node [left] at (0,2) {\scriptsize$2$};
    \node [left] at (0,3) {\scriptsize$3$};
    \node [left] at (0,4) {\scriptsize$4$};
    \node [left] at (0,5) {\scriptsize$5$};

    \end{tikzpicture}
    \caption{Grid representation of $E$, where $m=2$, $a_1=8$, $a_2=6$ and $\Delta=\left(\{0,1,2\}\times\{0,1\}\right) \cup \{(0,2),(1,2)\}$.}
    \label{fig:grid}
\end{figure}
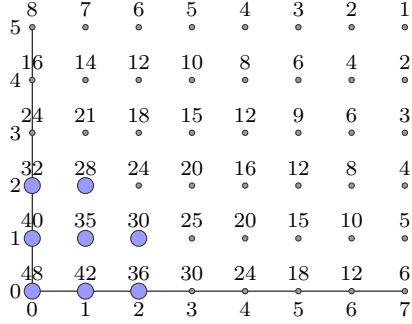
\end{example}

\begin{lemma}\label{le:isometricInnerProducts}
 Let $C_{\boldsymbol{v},\Delta}$ be a GMCC. Then  
 $(C_{\boldsymbol{v},\Delta})^{\perp_h}$ and $(C_{\boldsymbol{v},\Delta})^{\perp_e}$ are isometric.
\end{lemma}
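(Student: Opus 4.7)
The plan is to exhibit an explicit isometry between the two duals using the coordinatewise Frobenius map $\phi\colon \F_{q^2}^n \to \F_{q^2}^n$ defined by $\phi(a_0,\dots,a_{n-1}) = (a_0^q,\dots,a_{n-1}^q)$. First I would observe that $\phi$ is a bijection on $\F_{q^2}^n$ (in fact an involution, since $a^{q^2}=a$ for every $a\in\F_{q^2}$) and that it preserves Hamming weight, because $a_i\neq 0$ if and only if $a_i^q\neq 0$.

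The key algebraic observation is that for any $a,c \in \F_{q^2}^n$,
\[
\phi(a) \cdot_h c \;=\; \sum_{i=0}^{n-1} a_i^q c_i^q \;=\; \left(\sum_{i=0}^{n-1} a_i c_i\right)^{\!q} \;=\; (a\cdot_e c)^q,
\]
using only that $x\mapsto x^q$ is a field automorphism of $\F_{q^2}$. In particular, $\phi(a)\cdot_h c = 0$ if and only if $a\cdot_e c = 0$.

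Applying this to every $c$ in $C_{\boldsymbol{v},\Delta}$, I conclude that $a \in (C_{\boldsymbol{v},\Delta})^{\perp_e}$ if and only if $\phi(a) \in (C_{\boldsymbol{v},\Delta})^{\perp_h}$. Hence $\phi$ restricts to a bijection $(C_{\boldsymbol{v},\Delta})^{\perp_e} \longrightarrow (C_{\boldsymbol{v},\Delta})^{\perp_h}$, and since $\phi$ preserves Hamming weight, the two codes are isometric. There is no real obstacle in this proof; the only thing to check is the interplay between Frobenius and the two inner products, which is immediate, and the argument actually works for any linear code over $\F_{q^2}$, not just GMCCs.
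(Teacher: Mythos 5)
Your proposal is correct and is essentially the paper's argument made explicit: the paper simply notes $(C_{\boldsymbol{v},\Delta})^{\perp_h}=((C_{\boldsymbol{v},\Delta})^{\perp_e})^q$, which is exactly your statement that the coordinatewise Frobenius maps the Euclidean dual bijectively and weight-preservingly onto the Hermitian dual.
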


\begin{proof}
It is straightforward because $(C_{\boldsymbol{v},\Delta})^{\perp_h}=((C_{\boldsymbol{v},\Delta})^{\perp_e})^q$.
\end{proof}

\begin{lemma}\label{le:isometricDual}
Let $C_{\boldsymbol{v},\Delta}$ be a GMCC. Then
$(C_{\boldsymbol{1},\Delta})^{\perp_h}$ and $(C_{\boldsymbol{v},\Delta})^{\perp_h}$ are isometric.
\end{lemma}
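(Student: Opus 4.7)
The plan is to reuse, essentially verbatim, the weight-preserving bijection already produced in the proof of \Cref{prop:ClosedByDuality}. That proof exhibits a specific twist vector $\boldsymbol{w}=(w_0,\dots,w_{n-1})$ with $w_i=1/v_i^q$ such that $(C_{\boldsymbol{v},\Delta})^{\perp_h}=C_{\boldsymbol{w},\Delta}$, and moreover shows that the map
\[
\Phi\colon (C_{\boldsymbol{1},\Delta})^{\perp_h} \longrightarrow (C_{\boldsymbol{v},\Delta})^{\perp_h}, \qquad \boldsymbol{a} \longmapsto \boldsymbol{w}*\boldsymbol{a},
\]
is a bijection. So the only thing to check is that $\Phi$ preserves Hamming weight.

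The key observation is that because $\boldsymbol{v}\in(\F_{q^2}^*)^n$, each coordinate $v_i$ is nonzero, and therefore each $w_i = 1/v_i^q$ is nonzero as well. Since $\Phi$ acts by coordinate-wise multiplication by a vector all of whose entries are nonzero, it has the property that $\supp(\Phi(\boldsymbol{a}))=\supp(\boldsymbol{a})$ for every $\boldsymbol{a}\in(C_{\boldsymbol{1},\Delta})^{\perp_h}$. In particular $\w(\Phi(\boldsymbol{a}))=\w(\boldsymbol{a})$, which is exactly the definition of being an isometry. This mirrors the argument already used in \Cref{le:isometricPrimal} for the primal codes.

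There is no real obstacle here; the statement is an immediate dual analogue of \Cref{le:isometricPrimal}, and the entire content is: take the bijection from the proof of \Cref{prop:ClosedByDuality} and note that componentwise scaling by nowhere-zero coordinates preserves Hamming weight.
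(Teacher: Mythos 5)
Your proposal is correct and follows essentially the paper's own route: the paper's one-line proof cites \Cref{prop:ClosedByDuality} together with the weight-preservation idea of \Cref{le:isometricPrimal}, and you have simply made that explicit by naming the bijection $\boldsymbol{a}\mapsto\boldsymbol{w}*\boldsymbol{a}$ with $w_i=1/v_i^q\neq 0$ and observing it preserves supports, hence Hamming weights.
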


\begin{proof}
It follows from the fact that the family of GMCCs is closed under duality by \Cref{prop:ClosedByDuality} and by \Cref{le:isometricPrimal}.
\end{proof}

\begin{corollary}\label{re:TrickForDistance}
Let $C_{\boldsymbol{v},\Delta}$ be a GMCC. Then
$d((C_{\boldsymbol{v},\Delta})^{\perp_h})=d((C_{\boldsymbol{1},\Delta})^{\perp_e})$. 
\end{corollary}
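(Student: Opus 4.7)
The plan is to derive this corollary as an immediate consequence of the two preceding isometry lemmas, \Cref{le:isometricInnerProducts} and \Cref{le:isometricDual}. The key observation is that isometric codes, in the sense used in this paper (a bijection preserving Hamming weights), must have the same minimum distance, since the minimum distance is just the smallest Hamming weight of a nonzero codeword.

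First, I would apply \Cref{le:isometricDual} to the GMCC $C_{\boldsymbol{v},\Delta}$: this gives that $(C_{\boldsymbol{1},\Delta})^{\perp_h}$ and $(C_{\boldsymbol{v},\Delta})^{\perp_h}$ are isometric, hence
\[
d\bigl((C_{\boldsymbol{v},\Delta})^{\perp_h}\bigr) = d\bigl((C_{\boldsymbol{1},\Delta})^{\perp_h}\bigr).
\]
Next, I would apply \Cref{le:isometricInnerProducts} in the special case of the trivial twist vector $\boldsymbol{v}=\boldsymbol{1}$: this gives that $(C_{\boldsymbol{1},\Delta})^{\perp_h}$ and $(C_{\boldsymbol{1},\Delta})^{\perp_e}$ are isometric, and therefore
\[
d\bigl((C_{\boldsymbol{1},\Delta})^{\perp_h}\bigr) = d\bigl((C_{\boldsymbol{1},\Delta})^{\perp_e}\bigr).
\]
Chaining the two equalities yields $d((C_{\boldsymbol{v},\Delta})^{\perp_h}) = d((C_{\boldsymbol{1},\Delta})^{\perp_e})$, which is exactly the claim.

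There is no substantive obstacle to this argument: both isometry lemmas have already been established, and the implication ``isometric $\Rightarrow$ same minimum distance'' is immediate from the definition of isometry used here. The value of this corollary is mainly practical rather than technical: it reduces the computation of $d((C_{\boldsymbol{v},\Delta})^{\perp_h})$, which is what controls the quantum minimum distance via \Cref{th: quantherm}, to the analysis of the Euclidean dual of the untwisted GMCC $C_{\boldsymbol{1},\Delta}$. This Euclidean dual is a much more standard object for which bounds (such as the footprint bound of \Cref{cdist} and \Cref{footprint} applied to the dual) are already available.
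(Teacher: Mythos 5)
Your proposal is correct and follows essentially the same route as the paper: invoke \Cref{le:isometricDual} to identify the minimum distances of $(C_{\boldsymbol{v},\Delta})^{\perp_h}$ and $(C_{\boldsymbol{1},\Delta})^{\perp_h}$, then \Cref{le:isometricInnerProducts} with trivial twist to pass from the Hermitian dual to the Euclidean dual of $C_{\boldsymbol{1},\Delta}$, and conclude since isometries preserve minimum distance. No differences worth noting.
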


\begin{proof}
 This is because $(C_{\boldsymbol{v},\Delta})^{\perp_h}$ and $(C_{\boldsymbol{1},\Delta})^{\perp_h}$ are isometric (by \Cref{le:isometricDual}) and also
 $(C_{\boldsymbol{1},\Delta})^{\perp_h}$ is isometric to $(C_{\boldsymbol{1},\Delta})^{\perp_e}$ (by \Cref{le:isometricInnerProducts}).
\end{proof}

\section{Stabilizer Quantum Codes From Generalized \\ Monomial-Cartesian Codes}\label{gen}

In the present section we construct stabilizer quantum codes by applying \Cref{th: quantherm} to GMMCs (\Cref{gmcc}) with a specific twist vector. Recall from \Cref{preliminaries} that $q$ is an odd prime power, $\zeta_{q^2-1}$ denotes a primitive $q^2-1$-th root of unity, $\lambda\in \N$ is such that $\lambda \mid q-1$,  $a_1= \lambda(q+1)$,  
$2 \leq a_j \leq q^2-1$ for all $j=2, \dots, m$, and $n=a_1a_2\cdots a_m$. 
We are going to choose the twist vector defined explicitly as follows:
\begin{equation}\label{twist1}
\boldsymbol{v}=(\underbrace{\zeta_{q^2-1}^{\frac{q-1}{2}},\dots ,\zeta_{q^2-1}^{\frac{q-1}{2}}}_{\frac{n}{q+1}}, \underbrace{1,\dots, 1}_{\frac{n}{q+1}}, \underbrace{\zeta_{q^2-1}^{\frac{q-1}{2}}, \dots ,\zeta_{q^2-1}^{\frac{q-1}{2}}}_{\frac{n}{q+1}}, \dots, \underbrace{1,\dots, 1}_{\frac{n}{q+1}})\in(\F_{q^2}^*)^n.
\end{equation}
By the forthcoming \Cref{lem: solutions}, it follows  that
\begin{equation*}
\boldsymbol{v}^{q+1} = (\underbrace{-1,  \dots, -1}_{\frac{n}{q+1}}, \underbrace{1,  \dots, 1}_{\frac{n}{q+1}}, \underbrace{-1,  \dots, -1}_{\frac{n}{q+1}}, \dots, \underbrace{1,  \dots, 1}_{\frac{n}{q+1}}).
\end{equation*}
Observe that there are $q+1$ blocks of $-1$'s or $1$'s. Recall that the coordinates $v_{\boldsymbol{\alpha}}$ of $\boldsymbol{v}$ are labelled and ordered in the same way as the points $\boldsymbol{P_\alpha}\in Z$. This twist vector works as follows. For each $\boldsymbol{\alpha}\in E$,
\begin{equation}\label{vq+1}
        v_{\boldsymbol{\alpha}}^{q+1}=
		  \begin{cases}
			 -1 & \text{ if } 0 \leq (\alpha_1 \bmod 2\lambda ) \leq \lambda-1,\\
			 1  &  \text{ if }  \lambda \leq (\alpha_1 \bmod 2\lambda ) \leq 2\lambda-1.
		  \end{cases}
    \end{equation}
Notice that $v_{\boldsymbol{\alpha}}$ only depends on $\alpha_1$. 
The reason why we choose this specific twist vector is going to become clear in 
\Cref{le:generallambda}.
Notice that the existence of these $v_{\boldsymbol{\alpha}}^{q+1}=\left(\zeta_{q^2-1}^{\frac{q-1}{2}}\right)^{q+1}=-1$ for some primitive element $\zeta_{q^2-1}\in\F_{q^2}^*$ is guaranteed by the next lemma:

\begin{lemma}\label{lem: solutions}
    Let $q$ be an odd prime power. Then $x^{q+1}=-1$ has $q+1$ solutions in $\F_{q^2}$.
\end{lemma}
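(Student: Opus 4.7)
\medskip

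\noindent\textbf{Proof plan.} The most natural approach is to recognise the map $N \colon \mathbb{F}_{q^2}^{\ast} \to \mathbb{F}_q^{\ast}$, $x \mapsto x^{q+1}$, as the norm from $\mathbb{F}_{q^2}$ to $\mathbb{F}_q$. I would first verify that $N$ does land in $\mathbb{F}_q^{\ast}$: for any $x \in \mathbb{F}_{q^2}^{\ast}$ one has $(x^{q+1})^q = x^{q^2+q} = x \cdot x^q \cdot x^{q^2-1} = x^{q+1}$, using $x^{q^2-1} = 1$, so $x^{q+1}$ is fixed by the Frobenius and hence lies in $\mathbb{F}_q^{\ast}$. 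Then I would argue that $N$ is surjective, and conclude that every fibre (in particular the fibre over $-1$, which lies in $\mathbb{F}_q^{\ast}$ because $q$ is odd) has size $|\mathbb{F}_{q^2}^{\ast}|/|\mathbb{F}_q^{\ast}| = (q^2-1)/(q-1) = q+1$.

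\medskip

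\noindent The key step is surjectivity of $N$. The slickest way is to pick a generator $\zeta$ of the cyclic group $\mathbb{F}_{q^2}^{\ast}$, so that $N(\zeta) = \zeta^{q+1}$ has order $(q^2-1)/\gcd(q^2-1, q+1) = (q^2-1)/(q+1) = q-1$, i.e.\ generates $\mathbb{F}_q^{\ast}$. Alternatively, one can count: $\ker N$ consists of the roots of $x^{q+1} - 1$, a polynomial of degree $q+1$ with at most $q+1$ roots; any solution of $x^{q^2-1} = 1$ with $x^{q+1}\ne 1$ would sit in a proper subgroup, so a direct order count on $\mathbb{F}_{q^2}^{\ast}$ gives $|\ker N| = q+1$, whence $|\mathrm{Im}\, N| = (q-1)$, i.e.\ $N$ is onto $\mathbb{F}_q^{\ast}$.

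\medskip

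\noindent I do not expect any real obstacle: the statement is a special case of the standard fact that the norm map of a finite-field extension is surjective, combined with the observation that $-1 \in \mathbb{F}_q$ when $q$ is odd. The only thing worth writing carefully is why $-1$ is actually in the image (as opposed to some other element of $\mathbb{F}_q^{\ast}$); this follows immediately once surjectivity of $N$ is established, but one can also give it by hand by exhibiting $\zeta_{q^2-1}^{(q-1)/2}$ as a solution, noting $\bigl(\zeta_{q^2-1}^{(q-1)/2}\bigr)^{q+1} = \zeta_{q^2-1}^{(q^2-1)/2} = -1$, which is exactly the primitive element used in the twist vector \eqref{twist1} and therefore also confirms the assertion about $\boldsymbol{v}^{q+1}$ made just before the lemma.
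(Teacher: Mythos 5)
Your proposal is correct. The only real content of the lemma is that $-1$ lies in the image of the map $x\mapsto x^{q+1}$ and that the fibre has size $q+1$, and both your routes establish this. Your main plan goes through the norm map $N\colon \mathbb{F}_{q^2}^{\ast}\to\mathbb{F}_q^{\ast}$ and its surjectivity, which is more general than what is needed: it shows that \emph{every} element of $\mathbb{F}_q^{\ast}$ has exactly $q+1$ preimages, with the fibre count coming cleanly from $\lvert\ker N\rvert=(q^2-1)/(q-1)$. The paper instead skips surjectivity entirely and simply exhibits the explicit solution $\zeta_{q^2-1}^{(q-1)/2}$ (computing $\bigl(\zeta_{q^2-1}^{(q-1)/2}\bigr)^{q+1}=\zeta_{q^2-1}^{(q^2-1)/2}=-1$) and then multiplies it by the $q+1$ distinct $(q+1)$-th roots of unity, i.e.\ it writes down the coset of $\ker N$ by hand; this concreteness matters for the paper because that particular element is exactly the entry used in the twist vector \eqref{twist1}. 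Your closing remark, where you exhibit $\zeta_{q^2-1}^{(q-1)/2}$ directly, reproduces the paper's proof verbatim, so in effect your write-up contains the paper's argument as a special case plus the extra (correct, but unneeded here) surjectivity statement. One small point if you keep the counting variant of the surjectivity argument: the cleanest phrasing is that in the cyclic group $\mathbb{F}_{q^2}^{\ast}$ of order $q^2-1$ the equation $x^{q+1}=1$ has exactly $\gcd(q+1,q^2-1)=q+1$ solutions; the sentence about elements ``sitting in a proper subgroup'' as currently worded does not quite carry the count on its own.
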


\begin{proof}
    For a primitive element $\zeta_{q^2-1} \in \F_{q^2}^*$,
    we  consider $\zeta_{q^2-1}^{\frac{q-1}{2}}$. We have that $$\left(\zeta_{q^2-1}^{\frac{q-1}{2}}\right)^{q+1}= \zeta_{q^2-1}^{\frac{(q+1)(q-1)}{2}}
    = \zeta_{q^2-1}^{\frac{q^2-1}{2}}=-1.$$
    The same will be true for $\zeta_{q+1}^k \zeta_{q^2-1}^{\frac{q-1}{2}}$ where $\zeta_{q+1}^k$, $0\leq k \leq q$ is
    any $q+1$-th root of unity.\qedhere
\end{proof}


\subsection{Self-Orthogonality Conditions}

First we present some conditions for the evaluation vectors of monomials in $R$ to
be orthogonal for the Hermitian inner product, when our twist vector is used.

\begin{proposition}\label{le:generallambda}
    Keep the same notations as before. Let $q$ be an odd prime power and consider the  twist vector $\boldsymbol{v}$ defined in \eqref{twist1}. Let  $\boldsymbol{e}=(e_1, \dots, e_m)$, $\boldsymbol{e'}=(e'_1, \dots, e'_m) \in E$ be exponents of two monomials $X^{\boldsymbol{e}}$, $X^{\boldsymbol{e'}} \in R$. Then, the evaluation vectors under the map $\ev_{\boldsymbol{v}}$ of these monomials are orthogonal for the Hermitian inner product if one of the following conditions hold:
    \begin{itemize}
        \item $e_1 \equiv e'_1 \mod q+1$, or
        \item $e_1 \not\equiv e'_1 \mod \frac{q+1}{2}$.
    \end{itemize}
\end{proposition}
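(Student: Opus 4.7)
The plan is to expand the Hermitian inner product of the evaluation vectors and exploit the product structure of the situation. Writing out
\[
\ev_{\boldsymbol{v}}(X^{\boldsymbol{e}}) \cdot_h \ev_{\boldsymbol{v}}(X^{\boldsymbol{e'}}) = \sum_{\boldsymbol{\alpha}\in E} v_{\boldsymbol{\alpha}}^{q+1}\,\boldsymbol{P_\alpha}^{\boldsymbol{e}+q\boldsymbol{e'}},
\]
and using that $\boldsymbol{P_\alpha}$ is a tuple whose $j$-th coordinate depends only on $\alpha_j$, while by \eqref{vq+1} the factor $v_{\boldsymbol{\alpha}}^{q+1}$ depends only on $\alpha_1$, the sum factors as
\[
S \cdot \prod_{j=2}^{m}\Bigl(\sum_{\alpha_j=0}^{a_j-1} \xi_{(j,\alpha_j)}^{e_j+qe_j'}\Bigr),\qquad S:=\sum_{\alpha_1=0}^{\lambda(q+1)-1} v_{\alpha_1}^{q+1}\,\zeta_{\lambda(q+1)}^{\alpha_1(e_1+qe'_1)}.
\]
So it suffices to show that $S=0$ under either stated condition. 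Set $t:=e_1+qe'_1$.

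The key manipulation is to split $\alpha_1 = 2\lambda\beta + \gamma$ with $0\le \gamma\le 2\lambda-1$ and $0\le \beta\le (q-1)/2$. Because $v_{\alpha_1}^{q+1}$ depends only on $\gamma$ (equal to $-1$ for $\gamma<\lambda$ and $+1$ for $\gamma\ge \lambda$), the sum separates:
\[
S = \Bigl(\sum_{\beta=0}^{(q-1)/2}\zeta_{(q+1)/2}^{\beta t}\Bigr)\Bigl(\sum_{\gamma=0}^{2\lambda-1}\epsilon_\gamma\,\zeta_{\lambda(q+1)}^{\gamma t}\Bigr),
\]
using $\zeta_{\lambda(q+1)}^{2\lambda}=\zeta_{(q+1)/2}$. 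The first factor is a complete geometric sum over a $(q+1)/2$-th root of unity, hence vanishes unless $(q+1)/2 \mid t$. Since $t = e_1 - e'_1 + (q+1)e'_1$, this divisibility is equivalent to $e_1 \equiv e'_1 \pmod{(q+1)/2}$. This immediately disposes of the second bullet point: if $e_1 \not\equiv e'_1 \pmod{(q+1)/2}$, the first factor is zero.

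For the first bullet, I would further simplify the $\gamma$-sum by pairing $\gamma$ with $\gamma+\lambda$:
\[
\sum_{\gamma=0}^{2\lambda-1}\epsilon_\gamma\,\zeta_{\lambda(q+1)}^{\gamma t} = \bigl(-1+\zeta_{\lambda(q+1)}^{\lambda t}\bigr)\sum_{\gamma=0}^{\lambda-1}\zeta_{\lambda(q+1)}^{\gamma t} = \bigl(-1+\zeta_{q+1}^{t}\bigr)\sum_{\gamma=0}^{\lambda-1}\zeta_{\lambda(q+1)}^{\gamma t}.
\]
If $e_1 \equiv e'_1 \pmod{q+1}$ then $t\equiv e_1-e'_1\equiv 0\pmod{q+1}$, so $\zeta_{q+1}^{t}=1$ and the prefactor $-1+\zeta_{q+1}^{t}$ vanishes; hence $S=0$. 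This handles the first bullet point and completes the argument.

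I do not expect any serious obstacle. The main care needed is bookkeeping: verifying that $a_1=\lambda(q+1)$ is divisible by $2\lambda$ so that the splitting $\alpha_1=2\lambda\beta+\gamma$ covers $\{0,\ldots,a_1-1\}$ exactly once (it does, with $\beta$ ranging over $(q-1)/2+1$ values), and checking the root-of-unity identities $\zeta_{\lambda(q+1)}^{2\lambda}=\zeta_{(q+1)/2}$ and $\zeta_{\lambda(q+1)}^{\lambda}=\zeta_{q+1}$. Once the factorization of $S$ into a $\beta$-sum times a $\gamma$-sum is in hand, each hypothesis kills exactly one of the two factors.
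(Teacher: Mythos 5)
Your proposal is correct and follows essentially the same route as the paper: both reduce the Hermitian product to the single sum $S$ over $\alpha_1$ and, using the $2\lambda$-periodicity of $v_{\alpha_1}^{q+1}$, factor it into the same three pieces, namely a complete geometric sum of $\zeta_{(q+1)/2}^{t}$, the factor $\zeta_{q+1}^{t}-1$, and a $\lambda$-term geometric sum in $\zeta_{\lambda(q+1)}^{t}$. The only cosmetic difference is that you split $\alpha_1=2\lambda\beta+\gamma$ in one step and then pair $\gamma$ with $\gamma+\lambda$, whereas the paper first writes $\alpha_1=k\lambda+r$ and then separates $k$ by parity; the resulting factorization and the way each hypothesis kills one factor are identical.
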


\begin{proof}
    In order to compute some conditions under which two evaluations of monomials of the quotient ring $R$ are orthogonal for the Hermitian inner product we have to see when the following sum vanishes:
    \[
     \ev_{\boldsymbol{v}}(X^{\boldsymbol{e}}) \cdot_h \ev_{\boldsymbol{v}}(X^{\boldsymbol{e'}})= \sum_{\boldsymbol{\alpha}\in E} v_{\boldsymbol{\alpha}}^{q+1} \zeta_{\lambda(q+1)}^{\alpha_1(e_1 + q e'_1)} \xi_{(2,\alpha_2)}^{(e_2 +q e'_2)} \cdots \xi_{(m,\alpha_m)}^{(e_m + q e'_m)}.
    \] 
    Since $v_{\boldsymbol{\alpha}}$ only depends on $\alpha_1$ we can denote by $v_{\alpha_1}\coloneqq v_{(\alpha_1, \dots, \alpha_m)}=v_{\boldsymbol{\alpha}}$ and reorder the above sum in the following way:
\[
    \ev_{\boldsymbol{v}}(X^{\boldsymbol{e}})\ \cdot_h \ \ev_{\boldsymbol{v}}(X^{\boldsymbol{e'}}) = \left( \sum_{\alpha_1=0}^{\lambda (q+1)-1} v_{\alpha_1}^{q+1} \zeta_{\lambda(q+1)}^{\alpha_1(e_1 + q e'_1)} \right)\left(\sum_{\alpha_2=0}^{a_2-1} \xi_{(2,\alpha_2)}^{(e_2 +q e'_2)}\right) \dots \left(\sum_{\alpha_m=0}^{a_m-1} \xi_{(m,\alpha_m)}^{(e_m + q e'_m)}\right).
    \]
We can do that because all the coordinates $v_{\boldsymbol{\alpha}}$ in $\boldsymbol{v}$ that have the same $\alpha_1$ have the same value. Now we study when the first factor equals 0, and we will ignore the other factors, since the first one gives enough information for the proof. Consider then
\begin{equation}\label{S}
\sum_{\alpha_1=0}^{\lambda (q+1)-1} v_{\alpha_1}^{q+1} \zeta_{\lambda(q+1)}^{\alpha_1(e_1 + q e'_1)},
\end{equation}
which is a sum over $\alpha_1\in\{0,1,\dots,\lambda (q+1)-1\}$. We write each $\alpha_1$ in the form $k\lambda +r$ where $0\le k \le q$ and $0\le r <\lambda$.
Using this to break \eqref{S} into $\lambda$ blocks of size $q+1$, using the fact that $\zeta_{q+1}:=\zeta_{\lambda(q+1)}^\lambda$ is a primitive $q+1$-th root of unity and using the structure of the twist vector $\boldsymbol{v}$, we can write \eqref{S} as

\begin{align*}
    \sum_{\alpha_1=0}^{\lambda (q+1)-1} v_{\alpha_1}^{q+1} \zeta_{\lambda(q+1)}^{\alpha_1(e_1 + q e'_1)} &=  \sum_{\substack{0\leq k\leq q \\ 0 \leq r < \lambda}} v_{k\lambda+r}^{q+1} \zeta_{\lambda(q+1)}^{(k\lambda+r)(e_1+qe'_1)} = \sum_{k=0}^q v_{k\lambda}^{q+1} \zeta_{q+1}^{k(e_1+qe'_1)} \\
    & + \zeta_{\lambda(q+1)}^{e_1+qe'_1} \sum_{k=0}^q v_{k\lambda+1}^{q+1} \zeta_{q+1}^{k(e_1+qe'_1)} + \cdots + \zeta_{\lambda(q+1)}^{(\lambda-1)(e_1+qe'_1)} \sum_{k=0}^q v_{k\lambda+\lambda-1}^{q+1} \zeta_{q+1}^{k(e_1+qe'_1)} \\
    & =\left(1 + \zeta_{\lambda(q+1)}^{(e_1+qe'_1)} + \dots + \zeta_{\lambda(q+1)}^{(\lambda-1)(e_1+qe'_1)}\right) \left(\sum_{k=0}^{q} v_{k\lambda}^{q+1} \zeta_{q+1}^{k(e_1 + q e'_1)}\right).
\end{align*}

Notice that we can do that because from \eqref{vq+1} and the fact that $1\leq\lambda\leq q-1$ we have that $v_{k\lambda}^{q+1}=v_{k\lambda+1}^{q+1}=\cdots=v_{k\lambda+\lambda-1}^{q+1}$ for all $0\leq k \leq q$. Now using again \eqref{vq+1} and the fact that $\zeta_{\frac{q+1}{2}}:=\zeta_{q+1}^2$ is a primitive $\frac{q+1}{2}$-th root of unity, we rewrite the last sum in the following way:
\begin{align*}
    \sum_{k=0}^{q} v_{k\lambda}^{q+1} \zeta_{q+1}^{k(e_1 + q e'_1)} &= \sum_{k=0}^{\frac{q-1}{2}} v_{2k\lambda}^{q+1} \zeta_{q+1}^{2k(e_1 + q e'_1)}+\sum_{k=0}^{\frac{q-1}{2}} v_{2k\lambda+1}^{q+1} \zeta_{q+1}^{(2k+1)(e_1 + q e'_1)}\\
    &=\sum_{k=0}^{\frac{q-1}{2}} v_{2k\lambda}^{q+1} \zeta_{q+1}^{2k(e_1 + q e'_1)}-\zeta_{q+1}^{e_1+qe'_1} \sum_{k=0}^{\frac{q-1}{2}} v_{2k\lambda}^{q+1} \zeta_{q+1}^{2k(e_1 + q e'_1)}\\
    &=\zeta_{q+1}^{e_1+qe'_1}\left(\sum_{k=0}^{\frac{q-1}{2}} \zeta_{\frac{q+1}{2}}^{k(e_1 + q e'_1)}\right) -\left(\sum_{k=0}^{\frac{q-1}{2}} \zeta_{\frac{q+1}{2}}^{k(e_1 + q e'_1)}\right) \\
    &= (\zeta_{q+1}^{e_1+qe'_1}-1)\left(\sum_{k=0}^{\frac{q-1}{2}} \zeta_{\frac{q+1}{2}}^{k(e_1 + q e'_1)}\right).
\end{align*}

    Thus, we have shown that we can write \eqref{S} as
    \begin{equation*}
    \sum_{\alpha_1=0}^{\lambda (q+1)-1} v_{\alpha_1}^{q+1} \zeta_{\lambda(q+1)}^{\alpha_1(e_1 + q e'_1)}=P\left(\zeta_{\lambda(q+1)}^{e_1+qe'_1}\right)
    \biggl(\zeta_{q+1}^{e_1+qe'_1}-1\biggr)\left(\sum_{k=0}^{\frac{q-1}{2}} \zeta_{\frac{q+1}{2}}^{k(e_1 + q e'_1)}\right),
    \end{equation*}
    where
    $P(x)= 1+x+x^2+ \dots + x^{\lambda -1}$.
    The above product equals 0 if and only if one of the following conditions holds:
    \begin{itemize}
        \item $\zeta_{q+1}^{e_1+qe'_1} -1 = 0$ $\iff$ $e_1+qe'_1 \equiv 0 \mod q+1$. That is, $e_1 \equiv e'_1 \mod q+1$; or
        \item $\left(\sum_{k=0}^{\frac{q-1}{2}} \zeta_{\frac{q+1}{2}}^{k(e_1 + q e'_1)}\right)=0$ $\iff$ $e_1+qe'_1 \not\equiv 0 \mod \frac{q+1}{2}$. Since $q\equiv -1 \mod \frac{q+1}{2}$, this is equivalent to $e_1 \not\equiv e'_1 \mod \frac{q+1}{2}$; or
        \item $P\left(\zeta_{\lambda(q+1)}^{(e_1+qe'_1)}\right)=0$. This is true if and only if $\zeta_{\lambda(q+1)}^{(e_1+qe'_1)}$ is a $\lambda$-th root of unity other than 1. That is equivalent to $e_1+qe'_1 \equiv 0 \mod q+1$ and $e_1+qe'_1\not\equiv 0 \mod \lambda(q+1)$, which is a particular case of the first condition.
    \end{itemize} 
    Therefore, if either of the first two conditions hold, the sum
    \eqref{S} equals 0 and that implies that $\ev_{\boldsymbol{v}}(X^{\boldsymbol{e}})$ and $\ev_{\boldsymbol{v}}(X^{\boldsymbol{e'}})$ are orthogonal for the Hermitian inner product.
\end{proof}

\begin{remark}
    Consider the case when the twist vector is $\boldsymbol{1}$, $\lambda=1$ and $A_j$ is the set of $q+1$-th roots of unity, that is the solutions to $X_j^{q+1}-1=0$, for every $j=1,\dots,m$. Then for any $\Delta\subseteq E$ the GMCC $C_{\boldsymbol{1},\Delta}$  is an Affine Variety Code (AVC) and it is not self-orthogonal (for the Hermitian inner product). This is because when we compute the Hermitian inner product of the evaluations of any monomial $X^{\boldsymbol{e}}=X^{(e_1,\dots,e_m)}$ with itself, one obtains that 
    \begin{align*}
        \ev_{\boldsymbol{1}}(X^{\boldsymbol{e}}) \cdot_h \ev_{\boldsymbol{1}}(X^{\boldsymbol{e}}) & = \sum_{\boldsymbol{\alpha}\in E}  \zeta_{q+1}^{\alpha_1e_1 (1+ q)} \zeta_{q+1}^{\alpha_2e_2 (1+ q)} \cdots \zeta_{q+1}^{\alpha_me_m (1+ q)}
        \\ &= \left( \sum_{\alpha_1=0}^{q} \zeta_{q+1}^{\alpha_1e_1(1 + q)} \right)\left(\sum_{\alpha_2=0}^{q} \zeta_{q+1}^{\alpha_2e_2(1 +q)}\right) \dots \left(\sum_{\alpha_m=0}^{q} \zeta_{q+1}^{\alpha_me_m(1 + q)}\right) 
    \end{align*}
    and every factor above is
    \[
    \sum_{k=0}^q \zeta_{q+1}^{ke_1(1 + q)} =q+1 \neq 0.
    \]
    Thus, 
    the evaluation of a monomial is not orthogonal to itself, and these codes are not self-orthogonal. However, we are able to provide a twist vector $\boldsymbol{v}$ \eqref{twist1} to construct a self-orthogonal GMCC $C_{\boldsymbol{v},\Delta}$ which is isometric to the non self-orthogonal AVC $C_{\boldsymbol{1},\Delta}$.
    The problem of not getting evaluations of monomials to be self-orthogonal can happen also with other twist vectors, that is why one has to choose the twist vector carefully.
\end{remark}

\subsection{Our General Construction}

Before stating the theorem that is the general construction of this paper, 
recall the definition of the set $E$ in the previous section.
We define a subset in $E$ which will be useful in the following.

\begin{definition}\label{Delta0}
    Let $E_0:=\left\{\boldsymbol{e}=(e_1,\dots,e_m)\in E \mid 0 \leq e_1\leq\frac{q-1}{2}\right\}\subseteq E$.
\end{definition}

The next theorem shows that the set $E_0$ introduced in \Cref{Delta0} is used as a reference to construct Hermitian self-orthogonal GMCCs.

\begin{theorem}\label{te:SelfOrthogonal2}
    Let $q$ be an odd prime power and let $m\geq 1$, $\lambda \mid q-1$, $a_1:=\lambda(q+1)$ and $2\leq a_j \leq q^2-1$, $j=2,\dots,m$ be positive integers.
    Let $n:=a_1\cdots a_m$. Consider the twist vector $\boldsymbol{v}$ defined in \eqref{twist1} and the set $E_0\subseteq E$ introduced in \Cref{Delta0}. Let $\Delta$ be a subset of $E_0$. Then, 
    $$C_{\boldsymbol{v},\Delta}\subseteq (C_{\boldsymbol{v},\Delta})^{\perp_h}.$$ Therefore, there exists a stabilizer quantum code with parameters
    $$[[n,n-2\#\Delta, \geq d]]_q$$
    where $d=\dis((C_{\boldsymbol{1},\Delta})^{\perp_e})$.
\end{theorem}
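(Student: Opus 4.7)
The plan is to verify the two ingredients needed to invoke \Cref{th: quantherm} with $C:=C_{\boldsymbol{v},\Delta}$: first, the Hermitian self-orthogonality $C\subseteq C^{\perp_h}$; and second, a handle on the minimum distance of $C^{\perp_h}$. The length and dimension of $C_{\boldsymbol{v},\Delta}$ are $n$ and $\#\Delta$ respectively by construction, so once self-orthogonality is established \Cref{th: quantherm} immediately produces a stabilizer quantum code of parameters $[[n,n-2\#\Delta,\geq d((C_{\boldsymbol{v},\Delta})^{\perp_h})]]_q$.

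For self-orthogonality, by bilinearity of $\cdot_h$ it suffices to prove that $\ev_{\boldsymbol{v}}(X^{\boldsymbol{e}})\cdot_h \ev_{\boldsymbol{v}}(X^{\boldsymbol{e'}})=0$ for every pair $\boldsymbol{e},\boldsymbol{e'}\in \Delta$. The key tool is \Cref{le:generallambda}, which guarantees this vanishing whenever either $e_1\equiv e_1'\pmod{q+1}$ or $e_1\not\equiv e_1'\pmod{\frac{q+1}{2}}$. Since $\Delta\subseteq E_0$, both first coordinates satisfy $0\leq e_1,e_1'\leq\frac{q-1}{2}$. If $e_1=e_1'$ the first condition is trivial; otherwise $0<|e_1-e_1'|\leq \frac{q-1}{2}<\frac{q+1}{2}$, so $e_1\not\equiv e_1'\pmod{\frac{q+1}{2}}$ and the second condition holds. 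In either case the pairing vanishes, giving $C_{\boldsymbol{v},\Delta}\subseteq (C_{\boldsymbol{v},\Delta})^{\perp_h}$.

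For the distance claim, I would invoke \Cref{re:TrickForDistance} directly: it states $d((C_{\boldsymbol{v},\Delta})^{\perp_h})=d((C_{\boldsymbol{1},\Delta})^{\perp_e})=d$, so substituting into the bound coming from \Cref{th: quantherm} finishes the proof. The only genuinely nontrivial content is the case distinction on $e_1$ versus $e_1'$ modulo $q+1$ and $\frac{q+1}{2}$; this is also the step where the hypothesis $\Delta\subseteq E_0$ is essential and cannot be relaxed without losing the second condition of \Cref{le:generallambda}. In other words, the real work has already been done in the proof of \Cref{le:generallambda} and in the explicit design of the twist vector \eqref{twist1}; the present theorem is then an almost mechanical consequence, with the window $E_0=\{0,\ldots,\frac{q-1}{2}\}$ for the first coordinate chosen precisely so that one of the two alternatives of \Cref{le:generallambda} always fires.
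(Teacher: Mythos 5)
Your proposal is correct and follows essentially the same route as the paper: self-orthogonality of the monomial evaluations via \Cref{le:generallambda} using the constraint $0\leq e_1,e_1'\leq\frac{q-1}{2}$ from $E_0$ (your explicit case split $e_1=e_1'$ versus $0<|e_1-e_1'|<\frac{q+1}{2}$ is exactly the detail the paper leaves implicit), then \Cref{th: quantherm} for the quantum parameters and \Cref{re:TrickForDistance} for the identification $d((C_{\boldsymbol{v},\Delta})^{\perp_h})=d((C_{\boldsymbol{1},\Delta})^{\perp_e})$. The only quibble is terminological: the Hermitian form is sesquilinear rather than bilinear, but the reduction to generators remains valid since the form is additive in each argument and scalars factor out (conjugated in the second slot).
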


\begin{proof}
    Since for all $(e_1,\dots,e_m)\in \Delta$ we have $e_1\leq \frac{q-1}{2}$, the self-orthogonality follows from \Cref{le:generallambda}.
    The existence and parameters of the stabilizer quantum code follows from \Cref{th: quantherm}. Notice that $d=\dis((C_{\boldsymbol{v},\Delta})^{\perp_h})$, but from \Cref{re:TrickForDistance} we can conclude that $d=\dis((C_{\boldsymbol{1},\Delta})^{\perp_e})$.
\end{proof}

Notice that in the above theorem we do not give an explicit bound for the minimum distance, but it can be computed using \Cref{cdist} in every particular case.

\subsection{Our Specific Construction}

Now we are going to provide a strategy \cite{HoholdtHyperbolicCodes} to choose a set $\Delta\subseteq E_0$ so that we can control the minimum distance $\dis((C_{\boldsymbol{1},\Delta})^{\perp_e})$ and it maximizes the dimension of the resulting stabilizer quantum code.
To that purpose, we need the following
\begin{definition}\label{Deltat}
Let $2\leq t \leq \frac{q+3}{2}$ be a positive integer. Define
$$\Delta_t:=\left\{\boldsymbol{e}=(e_1,\ldots,e_m)\in E \ \middle\vert \ \prod_{j=1}^m (e_j+1)<t\right\}\subseteq E.$$
\end{definition}

\noindent Some instances of the above set are represented in \Cref{fig:Deltat}.

\begin{figure}[h]
\centering
    \begin{multicols}{3}
    \centering
    \begin{subfigure}[b]{0.3\textwidth}
    \centering
    \begin{tikzpicture}[y=0.6cm, x=0.6cm,font=\normalsize]
    \draw (0,0) -- (7,0);
    \draw (0,0) -- (0,5);

    \filldraw[fill=blue!40,draw=black!80] (0,0) circle (3pt)    node[anchor=south] {\scriptsize$48$};
    \filldraw[fill=blue!40,draw=black!80] (1,0) circle (3pt)    node[anchor=south] {\scriptsize$42$};
    \filldraw[fill=black!40,draw=black!80] (2,0) circle (1pt)    node[anchor=south] {\scriptsize$36$};
    \filldraw[fill=black!40,draw=black!80] (3,0) circle (1pt)    node[anchor=south] {\scriptsize$30$};
    \filldraw[fill=black!40,draw=black!80] (4,0) circle (1pt)    node[anchor=south] {\scriptsize$24$};
    \filldraw[fill=black!40,draw=black!80] (5,0) circle (1pt)    node[anchor=south] {\scriptsize$18$};
    \filldraw[fill=black!40,draw=black!80] (6,0) circle (1pt)    node[anchor=south] {\scriptsize$12$};
    \filldraw[fill=black!40,draw=black!80] (7,0) circle (1pt)    node[anchor=south] {\scriptsize$6$};
    \filldraw[fill=blue!40,draw=black!80] (0,1) circle (3pt)    node[anchor=south] {\scriptsize$40$};
    \filldraw[fill=black!40,draw=black!80] (1,1) circle (1pt)    node[anchor=south] {\scriptsize$35$};
    \filldraw[fill=black!40,draw=black!80] (2,1) circle (1pt)    node[anchor=south] {\scriptsize$30$};
    \filldraw[fill=black!40,draw=black!80] (3,1) circle (1pt)    node[anchor=south] {\scriptsize$25$};
    \filldraw[fill=black!40,draw=black!80] (4,1) circle (1pt)    node[anchor=south] {\scriptsize$20$};
    \filldraw[fill=black!40,draw=black!80] (5,1) circle (1pt)    node[anchor=south] {\scriptsize$15$};
    \filldraw[fill=black!40,draw=black!80] (6,1) circle (1pt)    node[anchor=south] {\scriptsize$10$};
    \filldraw[fill=black!40,draw=black!80] (7,1) circle (1pt)    node[anchor=south] {\scriptsize$5$};
    \filldraw[fill=black!40,draw=black!80] (0,2) circle (1pt)    node[anchor=south] {\scriptsize$32$};
    \filldraw[fill=black!40,draw=black!80] (1,2) circle (1pt)    node[anchor=south] {\scriptsize$28$};
    \filldraw[fill=black!40,draw=black!80] (2,2) circle (1pt)    node[anchor=south] {\scriptsize$24$};
    \filldraw[fill=black!40,draw=black!80] (3,2) circle (1pt)    node[anchor=south] {\scriptsize$20$};
    \filldraw[fill=black!40,draw=black!80] (4,2) circle (1pt)    node[anchor=south] {\scriptsize$16$};
    \filldraw[fill=black!40,draw=black!80] (5,2) circle (1pt)    node[anchor=south] {\scriptsize$12$};
    \filldraw[fill=black!40,draw=black!80] (6,2) circle (1pt)    node[anchor=south] {\scriptsize$8$};
    \filldraw[fill=black!40,draw=black!80] (7,2) circle (1pt)    node[anchor=south] {\scriptsize$4$};
    \filldraw[fill=black!40,draw=black!80] (0,3) circle (1pt)    node[anchor=south] {\scriptsize$24$};
    \filldraw[fill=black!40,draw=black!80] (1,3) circle (1pt)    node[anchor=south] {\scriptsize$21$};
    \filldraw[fill=black!40,draw=black!80] (2,3) circle (1pt)    node[anchor=south] {\scriptsize$18$};
    \filldraw[fill=black!40,draw=black!80] (3,3) circle (1pt)    node[anchor=south] {\scriptsize$15$};
    \filldraw[fill=black!40,draw=black!80] (4,3) circle (1pt)    node[anchor=south] {\scriptsize$12$};
    \filldraw[fill=black!40,draw=black!80] (5,3) circle (1pt)    node[anchor=south] {\scriptsize$9$};
    \filldraw[fill=black!40,draw=black!80] (6,3) circle (1pt)    node[anchor=south] {\scriptsize$6$};
    \filldraw[fill=black!40,draw=black!80] (7,3) circle (1pt)    node[anchor=south] {\scriptsize$3$};
    \filldraw[fill=black!40,draw=black!80] (0,4) circle (1pt)    node[anchor=south] {\scriptsize$16$};
    \filldraw[fill=black!40,draw=black!80] (1,4) circle (1pt)    node[anchor=south] {\scriptsize$14$};
    \filldraw[fill=black!40,draw=black!80] (2,4) circle (1pt)    node[anchor=south] {\scriptsize$12$};
    \filldraw[fill=black!40,draw=black!80] (3,4) circle (1pt)    node[anchor=south] {\scriptsize$10$};
    \filldraw[fill=black!40,draw=black!80] (4,4) circle (1pt)    node[anchor=south] {\scriptsize$8$};
    \filldraw[fill=black!40,draw=black!80] (5,4) circle (1pt)    node[anchor=south] {\scriptsize$6$};
    \filldraw[fill=black!40,draw=black!80] (6,4) circle (1pt)    node[anchor=south] {\scriptsize$4$};
    \filldraw[fill=black!40,draw=black!80] (7,4) circle (1pt)    node[anchor=south] {\scriptsize$2$};
    \filldraw[fill=black!40,draw=black!80] (0,5) circle (1pt)    node[anchor=south] {\scriptsize$8$};
    \filldraw[fill=black!40,draw=black!80] (1,5) circle (1pt)    node[anchor=south] {\scriptsize$7$};
    \filldraw[fill=black!40,draw=black!80] (2,5) circle (1pt)    node[anchor=south] {\scriptsize$6$};
    \filldraw[fill=black!40,draw=black!80] (3,5) circle (1pt)    node[anchor=south] {\scriptsize$5$};
    \filldraw[fill=black!40,draw=black!80] (4,5) circle (1pt)    node[anchor=south] {\scriptsize$4$};
    \filldraw[fill=black!40,draw=black!80] (5,5) circle (1pt)    node[anchor=south] {\scriptsize$3$};
    \filldraw[fill=black!40,draw=black!80] (6,5) circle (1pt)    node[anchor=south] {\scriptsize$2$};
    \filldraw[fill=black!40,draw=black!80] (7,5) circle (1pt)    node[anchor=south] {\scriptsize$1$};

    \node [below] at (0,0) {\scriptsize$0$};
    \node [below] at (1,0) {\scriptsize$1$};
    \node [below] at (2,0) {\scriptsize$2$};
    \node [below] at (3,0) {\scriptsize$3$};
    \node [below] at (4,0) {\scriptsize$4$};
    \node [below] at (5,0) {\scriptsize$5$};
    \node [below] at (6,0) {\scriptsize$6$};
    \node [below] at (7,0) {\scriptsize$7$};
    \node [left] at (0,0) {\scriptsize$0$};
    \node [left] at (0,1) {\scriptsize$1$};
    \node [left] at (0,2) {\scriptsize$2$};
    \node [left] at (0,3) {\scriptsize$3$};
    \node [left] at (0,4) {\scriptsize$4$};
    \node [left] at (0,5) {\scriptsize$5$};

    \end{tikzpicture}
    \caption{$\Delta_3$}
    \end{subfigure}
    \columnbreak

    \centering
        \begin{subfigure}[b]{0.3\textwidth}
    \centering
    \begin{tikzpicture}[y=0.6cm, x=0.6cm,font=\normalsize]
    \draw (0,0) -- (7,0);
    \draw (0,0) -- (0,5);

    \filldraw[fill=blue!40,draw=black!80] (0,0) circle (3pt)    node[anchor=south] {\scriptsize$48$};
    \filldraw[fill=blue!40,draw=black!80] (1,0) circle (3pt)    node[anchor=south] {\scriptsize$42$};
    \filldraw[fill=blue!40,draw=black!80] (2,0) circle (3pt)    node[anchor=south] {\scriptsize$36$};
    \filldraw[fill=black!40,draw=black!80] (3,0) circle (1pt)    node[anchor=south] {\scriptsize$30$};
    \filldraw[fill=black!40,draw=black!80] (4,0) circle (1pt)    node[anchor=south] {\scriptsize$24$};
    \filldraw[fill=black!40,draw=black!80] (5,0) circle (1pt)    node[anchor=south] {\scriptsize$18$};
    \filldraw[fill=black!40,draw=black!80] (6,0) circle (1pt)    node[anchor=south] {\scriptsize$12$};
    \filldraw[fill=black!40,draw=black!80] (7,0) circle (1pt)    node[anchor=south] {\scriptsize$6$};
    \filldraw[fill=blue!40,draw=black!80] (0,1) circle (3pt)    node[anchor=south] {\scriptsize$40$};
    \filldraw[fill=black!40,draw=black!80] (1,1) circle (1pt)    node[anchor=south] {\scriptsize$35$};
    \filldraw[fill=black!40,draw=black!80] (2,1) circle (1pt)    node[anchor=south] {\scriptsize$30$};
    \filldraw[fill=black!40,draw=black!80] (3,1) circle (1pt)    node[anchor=south] {\scriptsize$25$};
    \filldraw[fill=black!40,draw=black!80] (4,1) circle (1pt)    node[anchor=south] {\scriptsize$20$};
    \filldraw[fill=black!40,draw=black!80] (5,1) circle (1pt)    node[anchor=south] {\scriptsize$15$};
    \filldraw[fill=black!40,draw=black!80] (6,1) circle (1pt)    node[anchor=south] {\scriptsize$10$};
    \filldraw[fill=black!40,draw=black!80] (7,1) circle (1pt)    node[anchor=south] {\scriptsize$5$};
    \filldraw[fill=blue!40,draw=black!80] (0,2) circle (3pt)    node[anchor=south] {\scriptsize$32$};
    \filldraw[fill=black!40,draw=black!80] (1,2) circle (1pt)    node[anchor=south] {\scriptsize$28$};
    \filldraw[fill=black!40,draw=black!80] (2,2) circle (1pt)    node[anchor=south] {\scriptsize$24$};
    \filldraw[fill=black!40,draw=black!80] (3,2) circle (1pt)    node[anchor=south] {\scriptsize$20$};
    \filldraw[fill=black!40,draw=black!80] (4,2) circle (1pt)    node[anchor=south] {\scriptsize$16$};
    \filldraw[fill=black!40,draw=black!80] (5,2) circle (1pt)    node[anchor=south] {\scriptsize$12$};
    \filldraw[fill=black!40,draw=black!80] (6,2) circle (1pt)    node[anchor=south] {\scriptsize$8$};
    \filldraw[fill=black!40,draw=black!80] (7,2) circle (1pt)    node[anchor=south] {\scriptsize$4$};
    \filldraw[fill=black!40,draw=black!80] (0,3) circle (1pt)    node[anchor=south] {\scriptsize$24$};
    \filldraw[fill=black!40,draw=black!80] (1,3) circle (1pt)    node[anchor=south] {\scriptsize$21$};
    \filldraw[fill=black!40,draw=black!80] (2,3) circle (1pt)    node[anchor=south] {\scriptsize$18$};
    \filldraw[fill=black!40,draw=black!80] (3,3) circle (1pt)    node[anchor=south] {\scriptsize$15$};
    \filldraw[fill=black!40,draw=black!80] (4,3) circle (1pt)    node[anchor=south] {\scriptsize$12$};
    \filldraw[fill=black!40,draw=black!80] (5,3) circle (1pt)    node[anchor=south] {\scriptsize$9$};
    \filldraw[fill=black!40,draw=black!80] (6,3) circle (1pt)    node[anchor=south] {\scriptsize$6$};
    \filldraw[fill=black!40,draw=black!80] (7,3) circle (1pt)    node[anchor=south] {\scriptsize$3$};
    \filldraw[fill=black!40,draw=black!80] (0,4) circle (1pt)    node[anchor=south] {\scriptsize$16$};
    \filldraw[fill=black!40,draw=black!80] (1,4) circle (1pt)    node[anchor=south] {\scriptsize$14$};
    \filldraw[fill=black!40,draw=black!80] (2,4) circle (1pt)    node[anchor=south] {\scriptsize$12$};
    \filldraw[fill=black!40,draw=black!80] (3,4) circle (1pt)    node[anchor=south] {\scriptsize$10$};
    \filldraw[fill=black!40,draw=black!80] (4,4) circle (1pt)    node[anchor=south] {\scriptsize$8$};
    \filldraw[fill=black!40,draw=black!80] (5,4) circle (1pt)    node[anchor=south] {\scriptsize$6$};
    \filldraw[fill=black!40,draw=black!80] (6,4) circle (1pt)    node[anchor=south] {\scriptsize$4$};
    \filldraw[fill=black!40,draw=black!80] (7,4) circle (1pt)    node[anchor=south] {\scriptsize$2$};
    \filldraw[fill=black!40,draw=black!80] (0,5) circle (1pt)    node[anchor=south] {\scriptsize$8$};
    \filldraw[fill=black!40,draw=black!80] (1,5) circle (1pt)    node[anchor=south] {\scriptsize$7$};
    \filldraw[fill=black!40,draw=black!80] (2,5) circle (1pt)    node[anchor=south] {\scriptsize$6$};
    \filldraw[fill=black!40,draw=black!80] (3,5) circle (1pt)    node[anchor=south] {\scriptsize$5$};
    \filldraw[fill=black!40,draw=black!80] (4,5) circle (1pt)    node[anchor=south] {\scriptsize$4$};
    \filldraw[fill=black!40,draw=black!80] (5,5) circle (1pt)    node[anchor=south] {\scriptsize$3$};
    \filldraw[fill=black!40,draw=black!80] (6,5) circle (1pt)    node[anchor=south] {\scriptsize$2$};
    \filldraw[fill=black!40,draw=black!80] (7,5) circle (1pt)    node[anchor=south] {\scriptsize$1$};

    \node [below] at (0,0) {\scriptsize$0$};
    \node [below] at (1,0) {\scriptsize$1$};
    \node [below] at (2,0) {\scriptsize$2$};
    \node [below] at (3,0) {\scriptsize$3$};
    \node [below] at (4,0) {\scriptsize$4$};
    \node [below] at (5,0) {\scriptsize$5$};
    \node [below] at (6,0) {\scriptsize$6$};
    \node [below] at (7,0) {\scriptsize$7$};
    \node [left] at (0,0) {\scriptsize$0$};
    \node [left] at (0,1) {\scriptsize$1$};
    \node [left] at (0,2) {\scriptsize$2$};
    \node [left] at (0,3) {\scriptsize$3$};
    \node [left] at (0,4) {\scriptsize$4$};
    \node [left] at (0,5) {\scriptsize$5$};

    \end{tikzpicture}
    \caption{$\Delta_4$}
    \end{subfigure}

        \centering
        \begin{subfigure}[b]{0.3\textwidth}
    \centering
    \begin{tikzpicture}[y=0.6cm, x=0.6cm,font=\normalsize]
    \draw (0,0) -- (7,0);
    \draw (0,0) -- (0,5);

    \filldraw[fill=blue!40,draw=black!80] (0,0) circle (3pt)    node[anchor=south] {\scriptsize$48$};
    \filldraw[fill=blue!40,draw=black!80] (1,0) circle (3pt)    node[anchor=south] {\scriptsize$42$};
    \filldraw[fill=blue!40,draw=black!80] (2,0) circle (3pt)    node[anchor=south] {\scriptsize$36$};
    \filldraw[fill=blue!40,draw=black!80] (3,0) circle (3pt)    node[anchor=south] {\scriptsize$30$};
    \filldraw[fill=black!40,draw=black!80] (4,0) circle (1pt)    node[anchor=south] {\scriptsize$24$};
    \filldraw[fill=black!40,draw=black!80] (5,0) circle (1pt)    node[anchor=south] {\scriptsize$18$};
    \filldraw[fill=black!40,draw=black!80] (6,0) circle (1pt)    node[anchor=south] {\scriptsize$12$};
    \filldraw[fill=black!40,draw=black!80] (7,0) circle (1pt)    node[anchor=south] {\scriptsize$6$};
    \filldraw[fill=blue!40,draw=black!80] (0,1) circle (3pt)    node[anchor=south] {\scriptsize$40$};
    \filldraw[fill=blue!40,draw=black!80] (1,1) circle (3pt)    node[anchor=south] {\scriptsize$35$};
    \filldraw[fill=black!40,draw=black!80] (2,1) circle (1pt)    node[anchor=south] {\scriptsize$30$};
    \filldraw[fill=black!40,draw=black!80] (3,1) circle (1pt)    node[anchor=south] {\scriptsize$25$};
    \filldraw[fill=black!40,draw=black!80] (4,1) circle (1pt)    node[anchor=south] {\scriptsize$20$};
    \filldraw[fill=black!40,draw=black!80] (5,1) circle (1pt)    node[anchor=south] {\scriptsize$15$};
    \filldraw[fill=black!40,draw=black!80] (6,1) circle (1pt)    node[anchor=south] {\scriptsize$10$};
    \filldraw[fill=black!40,draw=black!80] (7,1) circle (1pt)    node[anchor=south] {\scriptsize$5$};
    \filldraw[fill=blue!40,draw=black!80] (0,2) circle (3pt)    node[anchor=south] {\scriptsize$32$};
    \filldraw[fill=black!40,draw=black!80] (1,2) circle (1pt)    node[anchor=south] {\scriptsize$28$};
    \filldraw[fill=black!40,draw=black!80] (2,2) circle (1pt)    node[anchor=south] {\scriptsize$24$};
    \filldraw[fill=black!40,draw=black!80] (3,2) circle (1pt)    node[anchor=south] {\scriptsize$20$};
    \filldraw[fill=black!40,draw=black!80] (4,2) circle (1pt)    node[anchor=south] {\scriptsize$16$};
    \filldraw[fill=black!40,draw=black!80] (5,2) circle (1pt)    node[anchor=south] {\scriptsize$12$};
    \filldraw[fill=black!40,draw=black!80] (6,2) circle (1pt)    node[anchor=south] {\scriptsize$8$};
    \filldraw[fill=black!40,draw=black!80] (7,2) circle (1pt)    node[anchor=south] {\scriptsize$4$};
    \filldraw[fill=blue!40,draw=black!80] (0,3) circle (3pt)    node[anchor=south] {\scriptsize$24$};
    \filldraw[fill=black!40,draw=black!80] (1,3) circle (1pt)    node[anchor=south] {\scriptsize$21$};
    \filldraw[fill=black!40,draw=black!80] (2,3) circle (1pt)    node[anchor=south] {\scriptsize$18$};
    \filldraw[fill=black!40,draw=black!80] (3,3) circle (1pt)    node[anchor=south] {\scriptsize$15$};
    \filldraw[fill=black!40,draw=black!80] (4,3) circle (1pt)    node[anchor=south] {\scriptsize$12$};
    \filldraw[fill=black!40,draw=black!80] (5,3) circle (1pt)    node[anchor=south] {\scriptsize$9$};
    \filldraw[fill=black!40,draw=black!80] (6,3) circle (1pt)    node[anchor=south] {\scriptsize$6$};
    \filldraw[fill=black!40,draw=black!80] (7,3) circle (1pt)    node[anchor=south] {\scriptsize$3$};
    \filldraw[fill=black!40,draw=black!80] (0,4) circle (1pt)    node[anchor=south] {\scriptsize$16$};
    \filldraw[fill=black!40,draw=black!80] (1,4) circle (1pt)    node[anchor=south] {\scriptsize$14$};
    \filldraw[fill=black!40,draw=black!80] (2,4) circle (1pt)    node[anchor=south] {\scriptsize$12$};
    \filldraw[fill=black!40,draw=black!80] (3,4) circle (1pt)    node[anchor=south] {\scriptsize$10$};
    \filldraw[fill=black!40,draw=black!80] (4,4) circle (1pt)    node[anchor=south] {\scriptsize$8$};
    \filldraw[fill=black!40,draw=black!80] (5,4) circle (1pt)    node[anchor=south] {\scriptsize$6$};
    \filldraw[fill=black!40,draw=black!80] (6,4) circle (1pt)    node[anchor=south] {\scriptsize$4$};
    \filldraw[fill=black!40,draw=black!80] (7,4) circle (1pt)    node[anchor=south] {\scriptsize$2$};
    \filldraw[fill=black!40,draw=black!80] (0,5) circle (1pt)    node[anchor=south] {\scriptsize$8$};
    \filldraw[fill=black!40,draw=black!80] (1,5) circle (1pt)    node[anchor=south] {\scriptsize$7$};
    \filldraw[fill=black!40,draw=black!80] (2,5) circle (1pt)    node[anchor=south] {\scriptsize$6$};
    \filldraw[fill=black!40,draw=black!80] (3,5) circle (1pt)    node[anchor=south] {\scriptsize$5$};
    \filldraw[fill=black!40,draw=black!80] (4,5) circle (1pt)    node[anchor=south] {\scriptsize$4$};
    \filldraw[fill=black!40,draw=black!80] (5,5) circle (1pt)    node[anchor=south] {\scriptsize$3$};
    \filldraw[fill=black!40,draw=black!80] (6,5) circle (1pt)    node[anchor=south] {\scriptsize$2$};
    \filldraw[fill=black!40,draw=black!80] (7,5) circle (1pt)    node[anchor=south] {\scriptsize$1$};

    \node [below] at (0,0) {\scriptsize$0$};
    \node [below] at (1,0) {\scriptsize$1$};
    \node [below] at (2,0) {\scriptsize$2$};
    \node [below] at (3,0) {\scriptsize$3$};
    \node [below] at (4,0) {\scriptsize$4$};
    \node [below] at (5,0) {\scriptsize$5$};
    \node [below] at (6,0) {\scriptsize$6$};
    \node [below] at (7,0) {\scriptsize$7$};
    \node [left] at (0,0) {\scriptsize$0$};
    \node [left] at (0,1) {\scriptsize$1$};
    \node [left] at (0,2) {\scriptsize$2$};
    \node [left] at (0,3) {\scriptsize$3$};
    \node [left] at (0,4) {\scriptsize$4$};
    \node [left] at (0,5) {\scriptsize$5$};

    \end{tikzpicture}
    \caption{$\Delta_5$}
    \end{subfigure}
    
    \end{multicols}
    \caption{Sets $\Delta_3$, $\Delta_4$ and $\Delta_5$, where $m=2$, $a_1=8$ and $a_2=6$.}
    \label{fig:Deltat}
\end{figure}
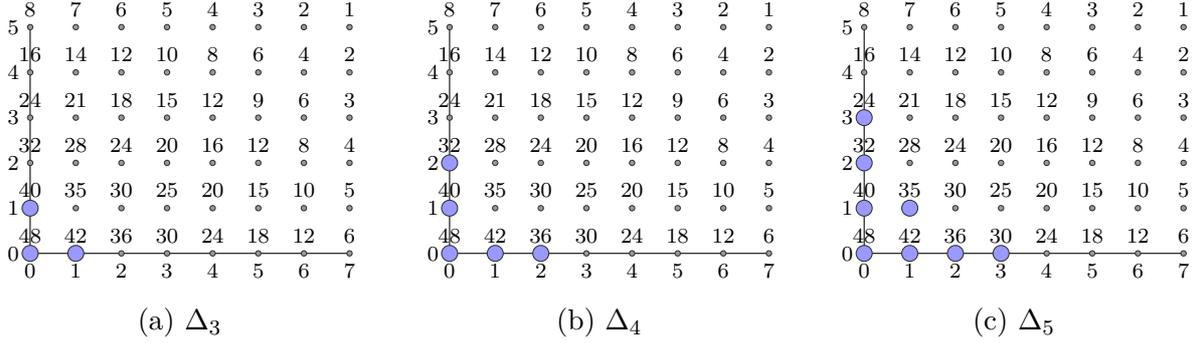

\begin{lemma}\label{le:distaceDual}
Let $\Delta_t\subseteq E$ be the set introduced in \Cref{Deltat}. Then,
$$\dis\left((C_{\boldsymbol{1},\Delta_t})^{\perp_e}\right)\geq t.$$
\end{lemma}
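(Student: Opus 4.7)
The plan is to apply the footprint bound (\Cref{cdist}) to the Euclidean dual code $(C_{\boldsymbol{1},\Delta_t})^{\perp_e}$, after first expressing it as a GMCC. This matches the remark following \Cref{cdist} that for monomial-Cartesian codes the footprint bound applied to the dual coincides with the Feng--Rao bound.

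First I would invoke the standard duality for (generalised) monomial-Cartesian codes, the Euclidean analogue of \Cref{prop:ClosedByDuality}: there exists a twist vector $\boldsymbol{w}$ such that
$$(C_{\boldsymbol{1},\Delta_t})^{\perp_e} = C_{\boldsymbol{w},\Delta^\perp},$$
where
$$\Delta^\perp := \bigl\{\boldsymbol{e}'\in E : (a_1-1-e'_1,\ldots,a_m-1-e'_m) \notin \Delta_t\bigr\}.$$
The inclusion $\supseteq$ follows from a coordinate-wise orthogonality computation: the Euclidean inner product of $\ev(X^{\boldsymbol{e}})$ and $\ev(X^{\boldsymbol{e}'})$ factors over the Cartesian product $Z$ into one-variable sums $\sum_{\xi\in A_j}\xi^{e_j+e'_j}$, and for an appropriate twist $\boldsymbol{w}$ these vanish unless the exponents are ``mirrored'' at $a_j-1$. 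Equality then follows by matching dimensions, since $|\Delta_t|+|\Delta^\perp| = n$.

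Next I would apply \Cref{cdist} to the GMCC $C_{\boldsymbol{w},\Delta^\perp}$, obtaining
$$\dis\bigl((C_{\boldsymbol{1},\Delta_t})^{\perp_e}\bigr) \;\geq\; \min_{\boldsymbol{e}'\in\Delta^\perp}\Dis(\boldsymbol{e}').$$
Finally, by \Cref{Deltat} applied to the mirrored exponent, the condition $(a_1-1-e'_1,\ldots,a_m-1-e'_m)\notin\Delta_t$ is exactly
$$\prod_{j=1}^m(a_j-e'_j)\;\geq\; t,$$
which is precisely $\Dis(\boldsymbol{e}')\geq t$. Consequently every $\boldsymbol{e}'\in\Delta^\perp$ satisfies $\Dis(\boldsymbol{e}')\geq t$, giving $\dis\bigl((C_{\boldsymbol{1},\Delta_t})^{\perp_e}\bigr)\geq t$.

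The main obstacle is the first step, namely the explicit identification of the Euclidean dual as a GMCC with exponent set $\Delta^\perp$. Although this is classical for MCCs and is implicit in the paper's remark identifying the two bounds, carrying it out for the arbitrary Cartesian factors $A_j$ here requires the orthogonality computation to be performed with the derivative-based twist $\boldsymbol{w}$; the rest of the argument is a direct translation between the defining inequality of $\Delta_t$ and the definition of $\Dis$.
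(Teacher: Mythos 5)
Your overall strategy (identify $(C_{\boldsymbol{1},\Delta_t})^{\perp_e}$ with a GMCC on the mirror-complement exponent set and then apply the footprint bound of \Cref{cdist}) is genuinely different from the paper's proof, which simply identifies $C_{\boldsymbol{1},\Delta_t}$ with the code $C(L_2)$ of [GGHR] and cites their equation (8); your route is self-contained, and its last two steps (dimension count, and the translation $(a_1-1-e'_1,\dots,a_m-1-e'_m)\notin\Delta_t \iff \Dis(\boldsymbol{e}')\geq t$, followed by \Cref{cdist}) are correct. However, the justification you give for the key duality step has a genuine gap. You assert that, for the derivative-based twist, the one-variable sums ``vanish unless the exponents are mirrored at $a_j-1$'', and you use this to claim $(C_{\boldsymbol{1},\Delta})^{\perp_e}=C_{\boldsymbol{w},\Delta^\perp}$ for a general exponent set $\Delta$. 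This is false for the arbitrary sets $A_j\subseteq\F_{q^2}^*$ allowed here ($j\geq 2$): with $w^{(j)}_\xi=1/Q_j'(\xi)$ one has $\sum_{\xi\in A_j}\xi^{s}/Q_j'(\xi)=0$ only for $0\leq s\leq a_j-2$, it equals $1$ for $s=a_j-1$, and for $s\geq a_j$ it equals a complete homogeneous symmetric function of $A_j$ which need not vanish (it does vanish when $A_j$ is the full root set of $X^{a_j}-c$, as for $A_1$, but not for arbitrary $A_j$; e.g.\ $A_j\supseteq\{1,-1\}$ already produces exponent sets $\Delta$ for which no everywhere-nonzero twist realizes the mirror dual). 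So the ``Euclidean analogue of \Cref{prop:ClosedByDuality} with exponent set $\Delta^\perp$'' is not a standard fact you can invoke for arbitrary $\Delta$.

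The missing idea that repairs the step is the lower-set (decreasing) structure of $\Delta_t$: if $\boldsymbol{e}\in\Delta_t$ and $\boldsymbol{e}'\in\Delta^\perp$, then one cannot have $e_j+e'_j\geq a_j-1$ for every $j$, for otherwise the mirror $(a_j-1-e'_j)_j$ would be coordinatewise $\leq\boldsymbol{e}$ and hence in $\Delta_t$ (since $\prod_j(e_j+1)<t$ is preserved under decreasing coordinates), contradicting $\boldsymbol{e}'\in\Delta^\perp$. Thus some coordinate satisfies $e_j+e'_j\leq a_j-2$, and that single factor $\sum_{\xi\in A_j}\xi^{e_j+e'_j}/Q_j'(\xi)$ vanishes, giving the containment $C_{\boldsymbol{w},\Delta^\perp}\subseteq(C_{\boldsymbol{1},\Delta_t})^{\perp_e}$ for the product derivative twist; your dimension count then upgrades this to equality, and the rest of your argument goes through verbatim. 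In short: correct and genuinely different approach from the paper, but as written the duality claim is stated in a generality in which it fails, and the proof must explicitly use that $\Delta_t$ is a decreasing set.
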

\begin{proof}
Using the notations in \cite[Section 3]{GGHR} the authors define a code $C(L_2)$, where
$$L_2=\{X_1^{i_1}\cdots X_m^{i_m}\in \Delta(s_1,\dots,s_m) \mid D^\perp(X_1^{i_1}\cdots X_m^{i_m})<\delta^\perp\}.$$
By choosing their $(s_1,\dots,s_m)$ and $\delta^\perp$ equal to our $(a_1,\dots,a_m)$ and $t$, respectively, then we have that
$$L_2=\{X^{\boldsymbol{e}} \mid \boldsymbol{e}\in\Delta_t\},$$
so $C(L_2)=C_{\boldsymbol{1},\Delta_t}$, see \cite[Definition 15]{GGHR}. The statement follows from their equation (8) in Section 3.
\end{proof}

\begin{theorem}\label{main2}
    Let $q$ be an odd prime power and let $m\geq 1$, $\lambda \mid q-1$, $a_1:=\lambda(q+1)$ and $2\leq a_j \leq q^2-1$, $j=2,\dots,m$ be positive integers.
    Let $n:=a_1\cdots a_m$. Consider the twist vector $\boldsymbol{v}$ defined in \eqref{twist1}, a positive integer
    $$2\leq t \leq \frac{q+3}{2}$$
    and the set $\Delta_t\subseteq E$ introduced in \Cref{Deltat}. Then, the following inclusion holds
    $$C_{\boldsymbol{v},\Delta_t}\subseteq (C_{\boldsymbol{v},\Delta_t})^{\perp_h}.$$ Therefore, there exists a stabilizer quantum code with parameters
    $$[[n,n-2\#\Delta_t, \geq t]]_q.$$
\end{theorem}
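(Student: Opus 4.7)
The plan is to reduce the theorem to the two preceding results, Theorem \ref{te:SelfOrthogonal2} and Lemma \ref{le:distaceDual}, by verifying that the set $\Delta_t$ is a subset of $E_0$. Once that inclusion is in hand, the Hermitian self-orthogonality of $C_{\boldsymbol{v},\Delta_t}$ is immediate from Theorem \ref{te:SelfOrthogonal2}, which also produces a stabilizer quantum code of length $n$ and dimension $n-2\#\Delta_t$ with minimum distance at least $d = \dis((C_{\boldsymbol{1},\Delta_t})^{\perp_e})$. Lemma \ref{le:distaceDual} then gives $d \geq t$, which completes the proof.

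First I would verify the inclusion $\Delta_t \subseteq E_0$. Pick any $\boldsymbol{e} = (e_1,\dots,e_m) \in \Delta_t$, so that $\prod_{j=1}^m (e_j+1) < t$. Since each factor $(e_j+1) \geq 1$, we have the bound
\[
e_1 + 1 \leq \prod_{j=1}^m (e_j+1) < t \leq \frac{q+3}{2}.
\]
Hence $e_1 < \frac{q+1}{2}$, and because $e_1$ is a non-negative integer and $\frac{q+1}{2}$ is an integer (since $q$ is odd), this forces $e_1 \leq \frac{q-1}{2}$. Therefore $\boldsymbol{e} \in E_0$, as required.

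With $\Delta_t \subseteq E_0$ established, Theorem \ref{te:SelfOrthogonal2} applies directly, yielding the Hermitian self-orthogonality $C_{\boldsymbol{v},\Delta_t} \subseteq (C_{\boldsymbol{v},\Delta_t})^{\perp_h}$ and the existence of a stabilizer quantum code with parameters $[[n, n-2\#\Delta_t, \geq d]]_q$, where $d = \dis((C_{\boldsymbol{1},\Delta_t})^{\perp_e})$. Finally, Lemma \ref{le:distaceDual} gives $d \geq t$, upgrading the bound on the minimum distance to the claimed $t$. Combining these two pieces produces a stabilizer quantum code with the stated parameters $[[n, n-2\#\Delta_t, \geq t]]_q$.

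Honestly, there is no serious obstacle here: the theorem is a packaging of the specific construction Lemma \ref{le:distaceDual} (which controls the dual Euclidean distance of $C_{\boldsymbol{1},\Delta_t}$ via the hyperbolic/Feng-Rao-type set $\Delta_t$) together with the general self-orthogonality result Theorem \ref{te:SelfOrthogonal2}. The only thing to check is the compatibility condition $\Delta_t \subseteq E_0$, and the inequality $t \leq \frac{q+3}{2}$ has been tailored in Definition \ref{Deltat} precisely so that this holds. The real work was done earlier, in Proposition \ref{le:generallambda}, where the twist vector \eqref{twist1} was designed so that the condition $e_1 \leq \frac{q-1}{2}$ on a single coordinate suffices for Hermitian orthogonality of monomial evaluations.
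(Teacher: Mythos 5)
Your proposal is correct and follows essentially the same route as the paper: verify $\Delta_t\subseteq E_0$ from $\prod_{j=1}^m(e_j+1)<t\le\frac{q+3}{2}$ (forcing $e_1\le\frac{q-1}{2}$), then invoke Theorem~\ref{te:SelfOrthogonal2} for self-orthogonality and the quantum code parameters, and Lemma~\ref{le:distaceDual} for the bound $\dis((C_{\boldsymbol{1},\Delta_t})^{\perp_e})\ge t$. The only cosmetic difference is that the paper re-cites Theorem~\ref{th: quantherm} and Corollary~\ref{re:TrickForDistance} explicitly, whereas you use the statement of Theorem~\ref{te:SelfOrthogonal2} as already packaging those facts, which is fine.
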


\begin{proof}
    
Let $\boldsymbol{e}\in \Delta_t$. From $\prod_{j=1}^m (e_j+1)<t$ we have that $e_1< t-1$. Since $t \le \frac{q+3}{2}$, then $e_1< t-1\le \frac{q+1}{2}$ and therefore $\Delta_t\subseteq E_0$. So, from \Cref{te:SelfOrthogonal2} we have that $C_{\boldsymbol{v},\Delta_t}\subseteq (C_{\boldsymbol{v},\Delta_t})^{\perp_h}$. 

    The existence and parameters of the stabilizer quantum code follows from \Cref{th: quantherm}. Notice that from \Cref{re:TrickForDistance} and \Cref{le:distaceDual}, we have $\dis((C_{\boldsymbol{v},\Delta_t})^{\perp_h})=\dis((C_{\boldsymbol{1},\Delta_t})^{\perp_e})\geq t$.
\end{proof}

\subsection{The Dimension}\label{dimsect}

We state a recursive formula for the dimension of the quantum code, which was shown in \cite{HoholdtHyperbolicCodes}.

Let $a, b\in\N$. Consider the case when $a_j=b$ for all $j=1, \dots, m$. We define 
\[
    V_b(m,a)\coloneqq \# \left\{(l_1, \dots , l_m) \ \middle\vert \ l_j \in \N, \  1\leq l_j \leq b, \  j = 1, \dots m, \ \prod_{j=1}^m l_j \leq a\right\}.
\]
In \cite{HoholdtHyperbolicCodes} they give the following recursive formula:
\[
    V_b(m,a)=\sum_{s=1}^{b}V\left(m-1, \floor*{\frac{a}{s}}\right),
\]
where $V_b(1,a)= \min\{a,b\}$.

    Observe that $\#\Delta_t = V_{\lambda(q+1)}(m,t-1)$, 
    where  all of $a_1, \ldots ,a_m$ are equal to $\lambda (q+1)$.
Therefore we can use the recursive formula described above to compute $\#\Delta_t$, and hence the dimension of
    the quantum code in \Cref{main2}.
For example, when $m=2$
\begin{equation}\label{delta2formula}
\#\Delta_t=V_{\lambda (q+1)}(2,t-1)=t-1+\floor*{\frac{t-1}{2}}+\floor*{\frac{t-1}{3}}+\cdots+\floor*{\frac{t-1}{t-2}}+
\floor*{\frac{t-1}{t-1}},
\end{equation}
and when $m=3$
$$\#\Delta_t=V_{\lambda (q+1)}(3,t-1)=\sum_{\alpha=1}^{t-1}
\sum_{\beta=1}^{\floor*{\frac{t-1}{\alpha}}}\floor*{\frac{t-1}{\alpha\beta}}.$$

\section{We obtain MDS and Hermitian Almost MDS quantum codes}\label{secMDS}

In this section we prove that we can obtain quantum codes that are close to the Singleton bound. Let us recall first the quantum Singleton bound.

  \begin{lemma}[Quantum Singleton bound \cite{R1999}]
   If a stabilizer quantum code with parameters $[[n,k,d]]_q$ exists, then 
   $n\geq k+2d-2$.
\end{lemma}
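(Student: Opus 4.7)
The plan is to reduce the quantum Singleton bound to the classical Singleton bound using the well-known correspondence between stabilizer codes and additive codes that are self-orthogonal with respect to a trace-symplectic (equivalently, trace-Hermitian) form. This is essentially the same framework underlying \Cref{th: quantherm}, but applied in reverse to an arbitrary stabilizer code rather than constructively.

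First, I would invoke the standard structural result: any $[[n,k,d]]_q$ stabilizer code arises from an additive subgroup $C \subseteq \mathbb{F}_q^{2n}$ of size $q^{n-k}$, self-orthogonal with respect to the symplectic trace form. Its symplectic dual $C^{\perp_s}$ then has size $q^{n+k}$, and $d$ equals the minimum symplectic weight $w_s$ of $C^{\perp_s}\setminus C$ (in general), or of $C^{\perp_s}\setminus\{0\}$ when the code is pure. Viewing each symplectic pair of coordinates as a single letter in the alphabet $\mathbb{F}_{q^2}$, we get a length-$n$ additive code over an alphabet of size $q^2$.

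Second, in the pure case I would apply the classical Singleton bound directly: a code of length $n$ over an alphabet of size $q^2$ containing $q^{n+k}$ codewords, with minimum Hamming distance $d$, must satisfy
\[
q^{n+k} \leq (q^2)^{n-d+1},
\]
which rearranges at once to $n \geq k + 2d - 2$. This step is purely combinatorial and does not use the additive structure.

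Third, I would handle the impure case by a shortening argument: if some codeword of $C^{\perp_s}$ has symplectic weight $d' < d$, it must lie in $C$; puncturing along its support produces a stabilizer code $[[n',k,d']]_q$ with $n' \leq n$ and $d' \geq d$ for which the bound already proved gives $n' \geq k + 2d - 2$, and therefore $n \geq n' \geq k + 2d - 2$. The main obstacle is justifying that this shortening preserves the key parameters (namely $k$ and a minimum distance of at least $d$), which is the standard delicate point in the argument; the remaining steps are essentially a direct transfer of the classical Singleton bound through the stabilizer formalism.
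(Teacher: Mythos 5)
Your pure-case argument is fine and standard: viewing $C^{\perp_s}$ as an additive length-$n$ code over an alphabet of size $q^2$ with $q^{n+k}$ codewords and minimum distance $d$, the classical Singleton bound $q^{n+k}\le (q^2)^{\,n-d+1}$ gives $n\ge k+2d-2$. (For comparison: the paper does not prove this lemma at all, it quotes it from \cite{R1999}, whose proof runs through quantum weight/shadow enumerators; the Knill--Laflamme proof uses an entropic no-cloning argument. Both are designed precisely to handle the case your sketch leaves open.)

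The impure case is a genuine gap, not merely a ``delicate point''. The lemma you would need --- deleting the $w<d$ coordinates supporting a nonzero element of $C$ yields a code with the same $k$ and minimum distance still at least $d$ --- is unproved in your sketch, and it is false for the natural puncturing/shortening operations. Take Shor's $[[9,1,3]]_2$ code, which is impure because $Z_1Z_2$ lies in the stabilizer. Projecting qubits $1,2$ onto $|00\rangle$ leaves the $7$-qubit code spanned by $|0\rangle\otimes(|000\rangle+|111\rangle)^{\otimes 2}$ and $|0\rangle\otimes(|000\rangle-|111\rangle)^{\otimes 2}$, on which the weight-$2$ operator $Z_4Z_7$ preserves the code space and acts nontrivially, so the distance has dropped to $2<3$; shortening the additive codes at $\{1,2\}$ instead changes $k$ and again admits the undetectable weight-$2$ error $Z_4Z_7$. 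Moreover, any removal rule with the properties you assert could be iterated over the disjoint supports $\{1,2\},\{4,5\},\{7,8\}$ of stabilizer elements of Shor's code and would output a $[[3,1,\ge 3]]_2$ code, contradicting the very bound being proved. So the impure case cannot be obtained by transferring the classical Singleton bound through puncturing/shortening; a complete proof has to invoke one of the genuinely different arguments above (which is presumably why the paper simply cites \cite{R1999}). There is also a small notational slip: you introduce $d'<d$ as the weight of a stabilizer element and then claim the punctured code has distance $d'\ge d$.
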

Codes  attaining equality are called quantum MDS codes. 

\subsection{MDS}

\begin{theorem}\label{MDS}
    The stabilizer quantum codes obtained from \Cref{main2} with $m=1$ are quantum MDS codes.
\end{theorem}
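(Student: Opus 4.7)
The plan is to show that when $m=1$, the construction produces a quantum code whose classical ingredient is a Reed--Solomon code, whose (Euclidean) dual is also MDS, so that the minimum distance lower bound from \Cref{main2} is in fact attained.

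First I would unravel the parameters in the case $m=1$. Here $n = a_1 = \lambda(q+1)$, and $\Delta_t = \{e \in E \mid e+1 < t\} = \{0,1,\dots,t-2\}$, so $\#\Delta_t = t-1$. The primal code $C_{\boldsymbol{1},\Delta_t}$ is therefore the image of the evaluation map on polynomials of degree at most $t-2$ at the $n$ distinct points of $A_1 \subset \mathbb{F}_{q^2}$; that is, it is a Reed--Solomon code of length $n$ and dimension $t-1$. Consequently $C_{\boldsymbol{1},\Delta_t}$ is MDS, and since the Euclidean dual of a (generalised) Reed--Solomon code is again a generalised Reed--Solomon code and hence MDS, we obtain
\[
    \dis\bigl((C_{\boldsymbol{1},\Delta_t})^{\perp_e}\bigr) = n - (n - (t-1)) + 1 = t.
\]

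Next I would transfer this distance to the Hermitian dual of the twisted code using the machinery already established in \Cref{preliminaries}. By \Cref{re:TrickForDistance},
\[
    \dis\bigl((C_{\boldsymbol{v},\Delta_t})^{\perp_h}\bigr) = \dis\bigl((C_{\boldsymbol{1},\Delta_t})^{\perp_e}\bigr) = t,
\]
so the lower bound $d \geq t$ from \Cref{main2} is in fact attained with equality. Combined with the dimension formula, \Cref{main2} gives a stabilizer quantum code with parameters $[[n,\,n-2(t-1),\,t]]_q$.

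Finally I would verify the Singleton condition: $k + 2d - 2 = (n - 2(t-1)) + 2t - 2 = n$, so the code meets the quantum Singleton bound and is quantum MDS. The main (and only) conceptual point in the argument is the identification of the relevant classical code as a Reed--Solomon code; once that is made, the MDS property of its dual and the isometry results from \Cref{preliminaries} do the rest. No ingredient besides standard Reed--Solomon theory and the earlier lemmas of the paper is needed.
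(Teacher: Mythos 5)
Your proposal is correct, but it takes a different route from the paper for the minimum-distance part. The paper's proof is purely arithmetic: it takes the parameters $[[\lambda(q+1),\lambda(q+1)-2(t-1),\geq t]]_q$ straight from \Cref{main2}, notes $k+2d\geq n+2$, and lets the quantum Singleton bound force equality; it never identifies the underlying classical code or computes any distance exactly. You instead observe that for $m=1$ the code $C_{\boldsymbol{1},\Delta_t}$ is a Reed--Solomon code (evaluation of polynomials of degree at most $t-2$ at the $n=\lambda(q+1)$ distinct roots of $X_1^{\lambda(q+1)}-1$), so its Euclidean dual is MDS with distance exactly $t$, and you transfer this to the Hermitian dual of the twisted code via \Cref{re:TrickForDistance}. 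This buys you the exact dual distance without invoking \Cref{le:distaceDual} at all, which is a genuinely more structural explanation of why $d=t$; the paper's argument is shorter and gets the same equality ``for free'' from the Singleton bound. One small imprecision: \Cref{th: quantherm} only guarantees that the quantum distance is \emph{at least} $\dis((C_{\boldsymbol{v},\Delta_t})^{\perp_h})$, so knowing the classical dual distance equals $t$ does not by itself give the quantum distance exactly $t$; but this is harmless, since a quantum distance strictly larger than $t$ with $k=n-2(t-1)$ would violate the quantum Singleton bound, so equality---and hence the MDS property---follows exactly as in your final verification (and as in the paper).
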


\begin{proof}
For any given bound for the minimum distance $t\in\{2,\dots, \frac{q+3}{2}\}$, we have $\Delta_t=\{0,1,2,\ldots ,t-2\}$.
The parameters of the stabilizer quantum code constructed from \Cref{main2} are
\begin{equation*}
[[n,k,d]]_q=[[\lambda (q+1),\lambda(q+1) - 2(t-1),\geq t]]_q.
\end{equation*}
It is easily verified that the above parameters provide a quantum MDS code, because $k+2d\geq \lambda(q+1) - 2(t-1)+2t=\lambda(q+1) +2=n+2$ and the quantum Singleton bound gives an equality.
\end{proof}

Some sample parameters are given in \Cref{t1,t2,t3,t4,t5}. For example, we obtain quantum MDS codes with parameters 
$[[12,8,3]]_5$ in \Cref{t2}, $[[8,4,3]]_7$ and $[[16,8,5]]_7$ in \Cref{t3} and $[[20,12,5]]_9$ in \Cref{t4}. 
We do not claim that these examples are new.

The article \cite{WanZhu} recently appeared on the arxiv and has a construction of MDS codes with lengths of the form $r(q^2-1)/h$ where $h$ is an even divisor of $q-1$ and $r\le h/2$ (their Theorems 3, 4 and 5). This article does not provide an explicit twist vector (they prove the existence of it). Our construction has an explicit twist vector and (in the $m=1$ case) gives codes with the same parameters.

\subsection{Hermitian Almost MDS}

The quantum Singleton defect of a parameter set $n,k,d$ is defined to be $n-(k+2d-2)$. 
MDS codes have quantum Singleton defect 0.
Codes with quantum Singleton defect 1 are called quantum almost MDS codes. However, from the statement of \Cref{th: quantherm} one can see that
the quantum Singleton defect of any code constructed using 
\Cref{th: quantherm} must be even, and thus a quantum Singleton defect of
1 cannot be achieved. The smallest nonzero Singleton defect 
of a code constructed using \Cref{th: quantherm} is therefore 2. This motivates the following definition.

\begin{definition}
 A quantum code constructed from \Cref{th: quantherm} with parameters $[[n,k,d]]_q$ such that $n=k+2d$ is called a
 \textit{quantum Hermitian almost MDS} (\textit{QHAMDS}) code.
\end{definition} 

In \Cref{MDS} we showed that we can construct quantum MDS codes. Recall that the quantum MDS conjecture \cite{Ketkar} states that $n\le q^2+1$ for a quantum MDS code with parameters $[[n,k,d]]_q$ and $q$ odd.  Now we are going to show that we can also construct quantum codes with $n>q^2+1$ that are at least QHAMDS. That is, they are either QHAMDS or MDS. If the quantum MDS conjecture is true, they cannot be MDS, and therefore they would have the best possible parameters.

\begin{theorem}\label{QHAMDSd3}
    The stabilizer quantum codes obtained from \Cref{main2} with $m=2$, $n>q^2+1$ and $t=3$ are at least QHAMDS.
\end{theorem}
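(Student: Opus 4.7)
The plan is straightforward: unpack the parameter $t=3$ in \Cref{main2}, compute $\#\Delta_3$ explicitly when $m=2$, and then check the quantum Singleton defect.

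First I would enumerate $\Delta_3$. By \Cref{Deltat},
\[
\Delta_3 = \{(e_1,e_2) \in E \mid (e_1+1)(e_2+1) < 3\}.
\]
Since $e_1,e_2 \in \N_0$ and $a_1,a_2 \geq 2$, the only triples of factors $(e_1+1)(e_2+1)$ strictly less than $3$ are $1\cdot 1$, $2\cdot 1$ and $1\cdot 2$, so $\Delta_3 = \{(0,0),(1,0),(0,1)\}$ and $\#\Delta_3 = 3$. This agrees with formula \eqref{delta2formula}: $2 + \lfloor 2/2\rfloor = 3$.

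Next, applying \Cref{main2} with $m=2$ and $t=3$, one obtains a stabilizer quantum code with parameters $[[n,\,n-6,\,d]]_q$ where $d \geq 3$. The quantum Singleton defect is therefore
\[
n - (k + 2d - 2) \;=\; n - (n-6) - 2d + 2 \;=\; 8 - 2d \;\leq\; 2,
\]
using $d \geq 3$. Combined with the quantum Singleton bound (which forces this defect to be nonnegative), the defect lies in $\{0,2\}$. Hence the code is either quantum MDS or QHAMDS, that is, at least QHAMDS.

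There is essentially no obstacle: once $\#\Delta_3 = 3$ is pinned down, the conclusion is an arithmetic check against the Singleton bound. The hypothesis $n > q^2 + 1$ is not needed for the inequality itself; rather, it makes the statement meaningful, since under the quantum MDS conjecture no quantum MDS $[[n,k,d]]_q$ code can exist for such $n$, so the defect should in fact equal $2$ and the code is genuinely QHAMDS.
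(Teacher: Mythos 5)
Your proposal is correct and follows essentially the same route as the paper: identify $\Delta_3=\{(0,0),(1,0),(0,1)\}$, read off the parameters $[[n,\,n-6,\,\geq 3]]_q$ from \Cref{main2}, and verify $k+2d\geq n$ so the Singleton defect is at most $2$. Your closing remark about the role of $n>q^2+1$ also matches the paper's discussion preceding the theorem.
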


\begin{proof}
Let $m=2$, $t=3$ and $\lambda$ and $a_2$ be as defined in \Cref{main2} such that $n>q^2+1$. We have $\Delta_3=\{(0,0),(1,0),(0,1)\}$ (see  \Cref{fig:Deltat}). The parameters of the stabilizer quantum code constructed from \Cref{main2} are
\begin{equation*}
[[n,k,d]]_q=[[\lambda (q+1)a_2,\lambda(q+1)a_2 - 6,\geq 3]]_q.
\end{equation*}
It is easily verified that the above parameters provide a code with is at least QHAMDS. This is
because $k+2d\geq\lambda(q+1)a_2 - 6+2\cdot 3=\lambda(q+1)a_2 =n$.
\end{proof}

Some examples will be given in \Cref{t1,t2,t3,t4,t5}.
In \cite{ternary} the authors study ternary quantum codes of minimum distance three. In that paper (their Theorem 4.4)  quantum codes with parameters $[[n,n-7,3]]_3$ are shown for certain lengths $n$. For those lengths which are a multiple of $4$ and less than $64$ we can  improve the dimension by 1, using the codes in \Cref{QHAMDSd3}.
See also \Cref{t1}.

\section{When $m=2$ we can beat Gilbert-Varshamov Bound}\label{secGV}

In this section we include a proof that an infinite family of codes 
obtained from our constructions will beat the quantum Gilbert-Varshamov bound when $m=2$.
We remark that the codes with $m>2$ can also beat the Gilbert-Varshamov bound, some
examples when $m=3$ are presented in \Cref{t1,t2,t4}.

Let us recall the  quantum Gilbert-Varshamov bound whose proof can be found in \cite{QuantumBounds}:

\begin{theorem}[Quantum Gilbert-Varshamov Bound]
    Suppose that $n>k \geq 2$, $d\geq 2$, and $n\equiv k \mod 2$. If
    \begin{equation}\label{qgvb}
    \frac{q^{n-k+2}-1}{q^2-1}\geq \sum_{i=1}^{d-1} (q^2-1)^{i-1} {n \choose i}
    \end{equation}
  then  there exists a pure stabilizer quantum code with parameters $[[n,k,d]]_q$.
\end{theorem}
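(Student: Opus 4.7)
The plan is to prove the theorem by the standard Gilbert--Varshamov averaging argument, carried out inside the Hermitian stabilizer framework provided by \Cref{th: quantherm}. Since $n\equiv k\pmod 2$, the number $(n-k)/2$ is a nonnegative integer, so it suffices to produce an $\F_{q^2}$-linear subspace $C\subseteq \F_{q^2}^n$ of dimension $(n-k)/2$ that is Hermitian self-orthogonal (so $C\subseteq C^{\perp_h}$) and whose dual $C^{\perp_h}$ contains no nonzero vector of Hamming weight strictly less than $d$; this last condition is precisely what ``pure'' means here. If such a $C$ exists, then \Cref{th: quantherm} delivers the required $[[n,k,d]]_q$ pure stabilizer code.

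Next I would set up the averaging. Let $\mathcal{S}$ be the collection of all Hermitian self-orthogonal $(n-k)/2$-dimensional $\F_{q^2}$-subspaces of $\F_{q^2}^n$, and for each $1$-dimensional subspace $L\subseteq \F_{q^2}^n$ spanned by a vector of weight at most $d-1$, set $\mathcal{S}_L:=\{C\in\mathcal{S}:L\subseteq C^{\perp_h}\}$. A suitable $C$ exists whenever
$$\sum_{L}|\mathcal{S}_L|<|\mathcal{S}|,$$
by a union bound. The number of such lines $L$ is
$$\frac{1}{q^2-1}\sum_{i=1}^{d-1}\binom{n}{i}(q^2-1)^i=\sum_{i=1}^{d-1}\binom{n}{i}(q^2-1)^{i-1},$$
which is exactly the right-hand side of \eqref{qgvb}. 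By transitivity of the unitary group $U(n,q^2)$ on lines of a fixed Hermitian type (isotropic or anisotropic) and on totally isotropic subspaces of a fixed dimension, the ratio $|\mathcal{S}_L|/|\mathcal{S}|$ depends only on whether $L$ is isotropic, and a Gaussian-binomial count for Hermitian forms over $\F_{q^2}$ yields in each case the uniform bound
$$\frac{|\mathcal{S}_L|}{|\mathcal{S}|}\leq \frac{q^2-1}{q^{n-k+2}-1},$$
which is the reciprocal of the left-hand side of \eqref{qgvb}. Combining these, the hypothesis \eqref{qgvb} translates exactly into $\sum_L |\mathcal{S}_L|\leq |\mathcal{S}|$, and some $C\in\mathcal{S}\setminus\bigcup_L \mathcal{S}_L$ must exist.

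The main obstacle, as is typical in quantum Gilbert--Varshamov arguments, is establishing the uniform upper bound on $|\mathcal{S}_L|/|\mathcal{S}|$. The relevant counts split into an isotropic and an anisotropic case (depending on whether $v\cdot_h v=0$ for a generator $v$ of $L$), and both must be shown to be at most $(q^2-1)/(q^{n-k+2}-1)$. Witt's extension theorem for Hermitian forms supplies the needed transitivity, and the bound reduces in each case to counting totally isotropic subspaces of a slightly smaller Hermitian space (namely $L^{\perp_h}/L$ when $L$ is isotropic, and $L^{\perp_h}$ when $L$ is anisotropic), which is a standard Gaussian-binomial computation. An alternative that avoids much of this case analysis is a greedy construction of $C$ one vector at a time, where \eqref{qgvb} is precisely the condition that at every step a legal extension of the current partial code exists.
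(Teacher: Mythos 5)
First, note that the paper does not prove this theorem: it is quoted from \cite{QuantumBounds}, where the argument is carried out in the trace-symplectic (additive-code) framework rather than through the Hermitian construction of \Cref{th: quantherm}. Your proposal therefore has to stand on its own, and its central counting lemma is false. The claimed uniform bound $|\mathcal{S}_L|/|\mathcal{S}|\le (q^2-1)/(q^{n-k+2}-1)$ fails. Concretely, take $(n-k)/2=1$ and $n=4$ (so $k=2$, an admissible pair): $\mathcal{S}$ is the set of isotropic points of a nondegenerate Hermitian space of dimension $4$ over $\F_{q^2}$, so $|\mathcal{S}|=(q^2+1)(q^3+1)$, while for an \emph{isotropic} line $L$ the members of $\mathcal{S}$ contained in $L^{\perp_h}$ number $q^3+q^2+1$ (the line $L$ itself plus $q^2$ lines over each of the $q+1$ isotropic points of $L^{\perp_h}/L$); hence $|\mathcal{S}_L|/|\mathcal{S}|=(q^3+q^2+1)/\bigl((q^2+1)(q^3+1)\bigr)>1/(q^2+1)=(q^2-1)/(q^{n-k+2}-1)$. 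This is not an isolated accident: double counting the pairs $(L,C)$ with $L\subseteq C^{\perp_h}$ shows that the average of $|\mathcal{S}_L|/|\mathcal{S}|$ over all $(q^{2n}-1)/(q^2-1)$ lines equals $(q^{n+k}-1)/(q^{2n}-1)$, which is strictly larger than $(q^2-1)/(q^{n-k+2}-1)$ for every $n>k\ge 2$ with $n\equiv k \bmod 2$; so at least one Hermitian type of line always violates your bound. Moreover both types occur among the lines you must control: weight-one lines are anisotropic, and isotropic weight-two lines exist because $x^{q+1}=-1$ is solvable (\Cref{lem: solutions}), so they are ``bad'' as soon as $d\ge 3$. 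Since hypothesis \eqref{qgvb} is calibrated exactly so that (number of bad lines) times your claimed ratio may reach $1$, there is no multiplicative slack to absorb the overshoot, and the one-shot union bound cannot deliver the stated inequality.

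There are two further soft spots even where your ratio happens to be exact (e.g.\ anisotropic lines with $n=4$, $k=2$): the non-strict hypothesis only yields $\sum_L|\mathcal{S}_L|\le|\mathcal{S}|$, which does not produce a code outside $\bigcup_L\mathcal{S}_L$, yet your last step asserts existence from it; and restricting to $\F_{q^2}$-linear Hermitian self-orthogonal codes is a genuine restriction of the class of stabilizer codes, so even a correct count for this family is not guaranteed to reach the Feng--Ma numerator $q^{n-k+2}$ (indeed, in the symplectic setting, where the group \emph{is} transitive on nonzero vectors and the count per vector is exact, the plain union bound only gives roughly $q^{n-k}$ in place of $q^{n-k+2}$). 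The proof cited by the paper gains that extra factor by a finer argument than a global union bound; the greedy, one-generator-at-a-time extension you mention in your closing sentence is the plausible route, but it is precisely the part that has not been carried out.
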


We say that a parameter set $n,k,d, q$ beats the QGV bound if the inequality 
\eqref{qgvb} is not satisfied.

In the $m=2$ case we have the following statement, using the codes
constructed in this paper.
In this statement we are using the formula \eqref{delta2formula}.

\begin{theorem}\label{beatGV2}
Given an odd prime power $q$, and given $d$ in the range $5\le d \le (q+3)/2$, let $n$  be in the interval
\[
\biggl((d-1)^{d-1}  \frac{q^2}{(q^2-1)^{d-1}}
q^{2(d-1) (0.7+\ln (d-1))}\biggr)^{\frac{1}{d-1}} \le n \le (q^2-1)^2
\]
and have the form $\lambda (q+1)a_2$ where
$\lambda \mid (q-1)$ and $2\leq a_2 \leq q^2-1$.
Then there exists a quantum code with parameters 
$$[[n,n-2\sum_{j=1}^{d-1}   \floor*{\frac{d-1}{j}},
\ge d]]_q$$ and this code beats the quantum Gilbert-Varshamov bound.
\end{theorem}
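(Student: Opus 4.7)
The plan is to apply \Cref{main2} with $m=2$ and $t=d$ to produce the code, and then to verify that the stated lower bound on $n$ forces the quantum Gilbert-Varshamov inequality \eqref{qgvb} to fail. Since the hypothesis $5 \le d \le (q+3)/2$ sits inside the admissible range $2 \le t \le (q+3)/2$ of \Cref{main2}, and since any $n$ of the form $\lambda(q+1)a_2$ with $\lambda \mid (q-1)$ and $2 \le a_2 \le q^2-1$ automatically satisfies $n \le (q^2-1)^2$, the construction immediately yields a code with the advertised parameters, the dimension being controlled by \eqref{delta2formula}, namely $\#\Delta_d = \sum_{j=1}^{d-1}\floor*{(d-1)/j}$.

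Writing $K := \#\Delta_d$ so that $n-k+2 = 2K+2$, I would next bound the left-hand side of \eqref{qgvb} above by $q^{2K+2}/(q^2-1)$, and bound its right-hand side below by the single last term $(q^2-1)^{d-2}\binom{n}{d-1}$. Applying the standard inequality $\binom{n}{d-1} \ge \bigl(n/(d-1)\bigr)^{d-1}$, the failure of \eqref{qgvb} reduces to the inequality
\[
n^{d-1} \ge \frac{(d-1)^{d-1}\, q^{2K+2}}{(q^2-1)^{d-1}}.
\]
Comparing this with the lower bound on $n$ assumed in the statement, what remains is the purely numerical inequality $K \le (d-1)\bigl(0.7 + \ln(d-1)\bigr)$, which will imply $q^{2K} \le q^{2(d-1)(0.7 + \ln(d-1))}$ and hence close the argument.

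Since $\floor*{(d-1)/j} \le (d-1)/j$, we have $K \le (d-1) H_{d-1}$, where $H_n$ denotes the $n$-th harmonic number, so it suffices to check that $H_{d-1} \le 0.7 + \ln(d-1)$ for all $d \ge 5$. The quantity $H_n - \ln n$ is strictly decreasing in $n$ with limit $\gamma \approx 0.5772$, so everything reduces to the boundary case $n = d-1 = 4$, where $H_4 - \ln 4 \approx 2.083 - 1.386 = 0.697 < 0.7$. This is precisely the role of the hypothesis $d \ge 5$, and the main subtlety of the argument: the constant $0.7$ is tightly calibrated so that the smallest admissible $d$ barely satisfies the harmonic bound, and any cruder estimate such as $H_n \le 1 + \ln n$ would destroy the match with the exponent $0.7 + \ln(d-1)$ appearing in the statement. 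Every other step is a routine monotonicity or binomial estimate.
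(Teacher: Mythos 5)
Your proof is correct and follows essentially the same route as the paper's: retain only the top term of the Gilbert--Varshamov sum, apply $\binom{n}{d-1}\ge\bigl(\tfrac{n}{d-1}\bigr)^{d-1}$, bound $\frac{q^{n-k+2}-1}{q^2-1}$ above by $\frac{q^2}{q^2-1}q^{n-k}$, and reduce to $\sum_{j=1}^{d-1}\lfloor\tfrac{d-1}{j}\rfloor\le(d-1)(0.7+\ln(d-1))$ via the harmonic-number estimate $H_{d-1}<0.7+\ln(d-1)$ for $d\ge5$. The only differences are organizational (the paper chains four named quantities $A>B\ge C>D$ where you compare exponents directly) and that you supply the monotonicity-plus-boundary-case justification of the harmonic bound that the paper merely asserts.
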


\begin{proof}
We use the codes whose existence is proved in \Cref{main2} in the case $m=2$.
The upper bounds $d\le (q+3)/2$ and  $n \le (q^2-1)^2$ follow from the construction
in \Cref{main2}.

Let 
$$
A=\sum_{i=1}^{d-1} (q^2-1)^{i-1} {n \choose i}
$$
and let
$$ D= \frac{q^{n-k+2}-1}{q^2-1}
$$
where $k=n-2\sum_{j=1}^{d-1}   \floor*{\frac{d-1}{j}}$ 
(this comes from \eqref{delta2formula} which uses our construction with $m=2$).
We wish to prove that $A>D$ under the stated hypotheses. 
To prove this, we are going to let
$$
B=\frac{1}{(d-1)^{d-1}} n^{d-1} (q^2-1)^{d-2}
$$
and let
$$
C=\biggl( \frac{q^2}{q^2-1} \biggr) \ q^{2(d-1) (0.7 + \ln (d-1))}
$$
and we will prove three things:
that $A>B$, that $B\ge C$, and that $C>D$. 
This will complete the proof that $A>D$. 

To show that $A>B$, we will use the estimate for binomial
coefficients ${n \choose k} > (\frac{n}{k})^k$. Then
\begin{align*}
    A=\sum_{i=1}^{d-1} (q^2-1)^{i-1} {n \choose i} &> {n \choose d-1} (q^2-1)^{d-2}\\
    &> \biggl( \frac{n}{d-1}\biggr)^{d-1} (q^2-1)^{d-2}\\
    &=\frac{1}{(d-1)^{d-1}} n^{d-1} (q^2-1)^{d-2}=B.
\end{align*}

To prove that $B\ge C$, rearranging the hypothesis
\[
\biggl((d-1)^{d-1}  \frac{q^2}{(q^2-1)^{d-1}}
q^{2(d-1) (0.7+\ln (d-1))}\biggr)^{\frac{1}{d-1}} \le n
\]
yields precisely that $B\ge C$.

To prove that $C>D$, we will use the fact that if $r\ge 4$ then
$H_r < 0.7+\ln r$ where $H_r$ is the $r$-th harmonic number defined by
$H_r = \sum_{j=1}^r \frac{1}{j}$. Then
\begin{align*}
\sum_{j=1}^{d-1}   \floor*{\frac{d-1}{j}} &<  \sum_{j=1}^{d-1} \frac{d-1}{j} \\
&=(d-1) H_{d-1}\\
&< (d-1) (0.7 + \ln (d-1)) \qquad \text{since $d-1\ge 4$}.
\end{align*}
It follows that 
\begin{align*}
    D&=\frac{q^{n-k+2}-1}{q^2-1} \\ 
    &< \frac{q^{n-k+2}}{q^2-1} \\
    &= \biggl( \frac{q^2}{q^2-1} \biggr) \ q^{n-k}\\
    &= \biggl( \frac{q^2}{q^2-1} \biggr) \ q^{2\sum_{j=1}^{d-1}   \floor*{\frac{d-1}{j}}}\\
    &< \biggl( \frac{q^2}{q^2-1} \biggr) \ q^{2(d-1) (0.7 + \ln (d-1))}=C.\qedhere
\end{align*}
\end{proof}

In this theorem we assumed that $d\ge 5$ because of the constant $0.7$, which 
is a  choice. The cases $d=3$ and $d=4$ can be proved separately.
They could be included in the proof above but the constant $0.7$ would have
to be larger.
Similarly, we could have stated the theorem for $d\ge 6$ and the constant would be
smaller, it would be $0.68$. Then the $d=5$ case would need to be handled separately.
As $d$ gets larger, the constant gets smaller and approaches the 
Euler-Mascheroni constant.

\bigskip
We show \Cref{ranges} where for each $q$ between 7 and 17 and $d=5,6,7$ we give the
range of values of $n$ for which the quantum Gilbert-Varshamov bound is beaten, as given by 
\Cref{beatGV2}.

    \begin{table}[H]
\centering
\begin{tabular}{|c|| c|c |c|c|c|} 
 \hline
 \diagbox{$d$}{$q$}   & 7 & 9 & 11 &13&17\\ 
 \hline
 \hline
 5   & 742-2304  & 1438-6400 &  2450-14400& 3818-28224 & 7800-82944\\
  \hline
  6  & $d>\frac{q+3}{2}$  & 3848-6400 &  7022-14400 & 11600-28224 &26006-82944\\
  \hline
  7   & $d>\frac{q+3}{2}$  & $d>\frac{q+3}{2}$ &  None &  None & 72590-82944\\
  \hline
\end{tabular}
\caption{\centering Some instances of the range of lengths of codes (from \Cref{beatGV2} only) that beat the quantum Gilbert-Varshamov bound.}
\label{ranges}
\end{table}

A separate special analysis for each $d$, or using better
estimates in the proof, or using a computer, will give a  better range of values for $n$
than the statement of \Cref{beatGV2}.
For example, when $q=7$ and $d=5$, computer calculations show that 
the Gilbert-Varshamov bound is beaten by our codes as soon as $n>295$,
whereas the proof of \Cref{beatGV2} gives $n\ge 742$.
As another example,  when $q=11$ and $d=7$,  
the range of values of $n$ as given by the statement of \Cref{beatGV2}  is empty
(in the table we wrote `none').
However, there are in fact values of $n$
that beat the Gilbert-Varshamov bound. We state one example 
$[[7200,7172,7]]_{11}$ in \Cref{t5}.

We also remark that \Cref{beatGV2} is for $m=2$. 
A similar result will hold for $m>2$.

\subsection{$d=3$}

In the previous theorem we assumed that $d\ge 5$ to obtain a slightly stronger 
statement.
We will treat the case that  $d=3$ (and $m=2$) separately, and we will complete the 
analysis in detail now. We omit the $d=4$ case, which is similar.

Suppose $d=3$. 
By the formula \eqref{delta2formula} we have that $\Delta_3$ has 3 elements,
see also \Cref{fig:Deltat}.
The two sides of the Gilbert-Varshamov bound become
$$  \frac{q^{n-k+2}-1}{q^2-1}=  \frac{q^{8}-1}{q^2-1}
=q^6+q^4+q^2+1$$
and
$$
\sum_{i=1}^{d-1} (q^2-1)^{i-1} {n \choose i}=
n+{n \choose 2}(q^2-1).
$$
To beat the G-V bound we obtain a condition which is a
 quadratic polynomial in $n$, namely
 we require that
 \[
 n+{n \choose 2}(q^2-1) - (q^6+q^4+q^2+1)>0.
 \]
Solving the quadratic yields that the G-V bound is 
beaten when
\[
n>\frac{q^2-3+\sqrt{8q^8+q^4-6q^2+1}}{2(q^2-1)}.
\]
For $m=2$ the largest possible $n$ is 
$(q-1)(q+1)(q^2-1)$.
Therefore, for each valid $n$ which is a multiple of $q+1$ between 
 $\frac{q^2-3+\sqrt{8q^8+q^4-6q^2+1}}{2(q^2-1)}$
and $(q^2-1)^2$
we obtain a code of that length  that beats the G-V bound.

We show \Cref{ranges2} where for each $q$ and $d=3$ we state the range of
values of  $n$ for which Gilbert-Varshamov bound is beaten.

    \begin{table}[H]
\centering
\begin{tabular}{|c|| c| c| c |c |c|} 
 \hline
 $q$ & 3 & 5 & 7 & 9 & 11 \\ 
 \hline
 Range of lengths & 15-64 & 38-576 & 72-2304 & 117-6400 & 174-14400\\
 \hline
\end{tabular}
\caption{\centering Some instances of the range of lengths of codes from \Cref{main2} with $d=3$ that beat the quantum Gilbert-Varshamov bound.}
\label{ranges2}
\end{table}

In the $d=4$ case (details omitted) the polynomial in $n$ would be cubic instead of quadratic.

\section{Examples}\label{secex}

\Cref{t1,t2,t3,t4,t5} show some samples of small values of the parameters of the quantum codes constructed with \Cref{main2}. For their minimum distance, we give the lower bound $t$ provided by \Cref{main2}. Recall that $q$ is an odd prime power, $a_1$ can be any $\lambda (q+1)$ where $\lambda$ is a divisor of $\frac{q-1}{2}$, and $a_2$ and $a_3$ can take any value between 2 and $q^2-1$. Note that for codes $[[n,k,d]]_q=[[n,k,\geq t]]_q$ constructed from \Cref{main2} we have $t\le \frac{q+3}{2}=3$ when $q=3$, and $t\le \frac{q+3}{2}=4$ when $q=5$. Recall also those with $n+2=k+2d$ are called MDS codes and codes with $n=k+2d$ are called QHAMDS codes. We also say in the sixth column if that code beats the Gilbert-Varshamov bound in the sense explained before \Cref{beatGV2}.

The article \cite{WanZhu} recently appeared on the arxiv and has a construction of MDS codes with lengths of the form $r(q^2-1)/h$ where $h$ is an even divisor of $q-1$ and $r\le h/2$ (their Theorems 3, 4 and 5).  Some of the MDS codes appearing in our tables may also be obtained with their construction.

\bigskip

\begin{table}[H]
\centering
\begin{tabular}{ | *{6}{c|} >{\centering\arraybackslash} p{6.6cm}|}
 \hline
  $m$ &$a_1$&$a_2$ &$a_3$& Quantum Code  & Beats QGV   & Comment\\
 \hline
 \hline
   1 &4&& & $[[4,0,3]]_3$ &      No    &    MDS    \\ \hline
    1 &8&&&   $[[8,4,3]]_3$      & Yes      & MDS \\ \hline
  2&4&5&   &$[[20,14,3]]_3$        &Yes & QHAMDS      \\ \hline
  2&4&6&    &$[[24,18,3]]_3$        & Yes&  QHAMDS     \\ \hline
  2&4&7&    &$[[28,22,3]]_3$        & Yes& QHAMDS      \\ \hline
  2&4&8& &   $[[32,26,3]]_3$  &     Yes    &QHAMDS, beats $[[32,25,3]]_5$ in \cite{ternary}  \\ \hline
  2& 8&5&   &  $[[40,34,3]]_3$      &  Yes  &   QHAMDS, beats $[[33,23,3]]_5$ in \cite{ternary}     \\ \hline
  2& 8&6&   &  $[[48,42,3]]_3$      &  Yes  &   QHAMDS,   beats $[[48,41,3]]_5$ in \cite{ternary}   \\ \hline
  2& 8&7&    &  $[[56,50,3]]_3$      &  Yes  &   QHAMDS, beats $[[56,49,3]]_5$ in \cite{ternary}      \\ \hline
 2& 8&8&    &  $[[64,58,3]]_3$      &  Yes  &    QHAMDS , beats $[[64,57,3]]_5$ in \cite{ternary}    \\ \hline
  3 &8& 3&3  &  $[[72,64,3]]_3$       & Yes     &Beats $[[72,62,3]]_3$ in  \cite{KongConstacyclic2023} \\ \hline
     3 &4& 8&4  &  $[[128,120,3]]_3$       & Yes     &Length not obtained with $m=1, 2$  \\ \hline
        \end{tabular}
\caption{A $q=3$ sample of codes.}
\label{t1}
\end{table}
\bigskip

\bigskip

\begin{table}[H]
\centering
\begin{tabular}{ | *6{c|} >{\centering\arraybackslash} p{6.4cm}|  }
 \hline
  $m$ &$a_1$&$a_2$ &$a_3$& Quantum Code  & Beats QGV   & Comment\\
 \hline
 \hline
   1 &6&& &  $[[6,2,3]]_5$ &      No    &    MDS    \\ \hline
   1 &12&& &  $[[12,8,3]]_5$ &      Yes    &    MDS    \\ \hline
     1 &12&& &  $[[12,6,4]]_5$ &      Yes    &    MDS    \\ \hline
  2&6&5&    &$[[30,24,3]]_5$        & No&  QHAMDS, beats $[[33,13,3]]_5$ in  \cite{Constacyclic2022} \\ \hline
  2&6&6&    &$[[36,30,3]]_5$        & No&  QHAMDS    \\ \hline
  2&6&6&    &$[[36,26,4]]_5$        & No&  Length    not obtained with $m=1$\\ \hline
  2&6&7&    &$[[42,36,3]]_5$        & Yes& QHAMDS     \\ \hline
  2&6&13&    &$[[78,72,3]]_5$        & Yes& QHAMDS, beats $[[80,68,3]]_5$ in \cite{Constacyclic2022} \\ \hline
  2&6&13&    &$[[78,68,4]]_5$        & Yes&  Beats $[[78,60,4]]_5$ in \cite{MatrixProductCodes2018} \\ \hline
  2&6&16&    &$[[96,86,4]]_5$        & Yes&  Same as in  \cite{MatrixProductCodes2018}   \\ \hline
   2 &6& 19&  &  $[[114,104,4]]_5$       &  Yes   &  Length not obtained with $m=1$ \\ \hline
  2 &6& 22&  &  $[[132,122,4]]_5$       &  Yes   &Beats $[[132,118,4]]_5$ in \cite{HermitianDualConstacyclic2021} \\ \hline
 2&12&24&    &$[[288,282,3]]_5$        &Yes &  QHAMDS    \\ \hline
  2&12&24&    &$[[288,278,4]]_5$        &Yes &    Beats $[[288,275,4]]_5$ in  \cite{GHR2015MPC} \\ \hline
    3&24&13& 2   &$[[624,612,4]]_5$        &Yes &   Same as in  \cite{GHR2015MPC} \\ \hline
 3&24&24&2    &$[[1152,1144,3]]_5$        & Yes&  Length not obtained with $m=1, 2$   \\ \hline
 
        \end{tabular}

      \caption{A $q=5$ sample of codes.}
      \label{t2}
\end{table}  

\bigskip

\begin{table}[H]
\centering
\begin{tabular}{ | *6{c|} >{\centering\arraybackslash} p{6.7cm}|  }
 \hline
  $m$ &$a_1$&$a_2$ &$a_3$& Quantum Code  & Beats QGV   & Comment\\
 \hline
 \hline
   1 &8&&  & $[[8,4,3]]_7$ &      No    &    MDS    \\ \hline
   1 &16&&   &$[[16,12,3]]_7$    &Yes &   MDS    \\ \hline
   1 &16&&   &$[[16,10,4]]_7$    &Yes &   MDS    \\ \hline
      1 &16&&   &$[[16,8,5]]_7$    &Yes &   MDS    \\ \hline
   1 &24&&   &$[[24,20,3]]_7$    & Yes&   MDS, same as \cite{WanZhu}    \\ \hline
   1&48&&    &$[[48,44,3]]_7$        &Yes &     MDS   \\ \hline
  2&8&7&    &$[[56,50,3]]_7$        & No&   QHAMDS    \\ \hline
  2&8&8&    &$[[64,58,3]]_7$        & No&   QHAMDS, beats $[[65,53,3]]_7$ in \cite{Kolotolu2019QUANTUMCW}    \\ \hline
  2&8&8&    &$[[64,54,4]]_7$        & No&   Length not obtained with $m=1$  \\ \hline
  2&8&8&    &$[[64,48,5]]_7$        &No &   Beats $[[65,41,5]]_7$ in \cite{Kolotolu2019QUANTUMCW}  \\ \hline
2&8&9&    &$[[72,66,3]]_7$  & Yes&  QHAMDS, beats $[[75,63,3]]_7$ in \cite{MatrixProductCodes2018}\\ \hline
2&8&9&    &$[[72,56,5]]_7$        & No&    Beats $[[75,51,5]]_7$ in \cite{MatrixProductCodes2018}     \\ \hline
2&8&15&    &$[[120,114,3]]_7$  & Yes& QHAMDS, beats $[[126,114,3]]_7$ in \cite{Constacyclic2022} \\ \hline
2&8&21&    &$[[168,162,3]]_7$  & Yes& QHAMDS, beats $[[168,158,3]]_7$ in \cite{Constacyclic2022}  \\ \hline
2&8&21&    &$[[168,158,4]]_7$  & Yes&  Beats $[[168,152,4]]_7$ in \cite{Constacyclic2022}  \\ \hline
  2&8&25&    &$[[200,190,4]]_7$        & Yes&    Same as in \cite{MatrixProductCodes2018}   \\ \hline
 2&8&48&    &$[[384,378,3]]_7$        & Yes&    QHAMDS,  same as in \cite{CaoCui}   \\ \hline
 2&8&48&    &$[[384,374,4]]_7$        &Yes &   Same as in \cite{CaoCui}    \\ \hline
 2&8&48&    &$[[384,368,5]]_7$        & Yes &    Same as in \cite{CaoCui}   \\ \hline
  2&16&27&    &$[[432,422,4]]_7$        & Yes &   Beats $[[432,419,4]]_7$ in  \cite{GHR2015MPC}     \\ \hline
 3&16&48&2    &$[[768,760,3]]_7$        &Yes &  Length not obtained with $m=1, 2$    \\ \hline
        \end{tabular}
\caption{A $q=7$ sample of codes.}
\label{t3}
\end{table}  

\bigskip

\begin{table}[H]
\centering
\begin{tabular}{ | *6{c|} >{\centering\arraybackslash} p{5.9cm}|  }
 \hline
  $m$ &$a_1$&$a_2$ &$a_3$& Quantum Code  & Beats QGV   & Comment\\
 \hline
 \hline
   1 &10&&  & $[[10,6,3]]_9$ &      No    &    MDS    \\ \hline
   1 &20&&  & $[[20,16,3]]_9$ &      Yes    &    MDS    \\ \hline
    1 &20&&  & $[[20,14,4]]_9$ &      Yes    &    MDS    \\ \hline
       1 &20&&  & $[[20,12,5]]_9$ &      Yes    &    MDS    \\ \hline
   1 &40&&  & $[[40,36,3]]_9$ &      Yes    &    MDS    \\ \hline
   2 &10&10&   &$[[100,80,6]]_9$    &Yes &  Length not obtained with $m=1$ \\ \hline
    2 &10&24&   &$[[240,230,4]]_9$    &Yes & Beats  $[[246,228,4]]_9$ in \cite{MatrixProductCodes2018}  \\ \hline
   2 &10&55&   &$[[550,534,5]]_9$    &Yes &  Length not obtained with $m=1$ \\ \hline
   3&80&80&2    &$[[12800,12792,3]]_9$        &Yes &   Length not obtained with $m=1,2$     \\ \hline
        \end{tabular}
        \caption{A $q=9$ sample of codes.}
        \label{t4}
\end{table}  

\bigskip

\bigskip

\begin{table}[H]
\centering
\begin{tabular}{ | *6{c|} >{\centering\arraybackslash} p{6cm}|  }
 \hline
  $m$ &$a_1$&$a_2$ &$a_3$& Quantum Code  & Beats QGV   & Comment\\
 \hline
 \hline
   1 &12&& &  $[[12,8,3]]_{11}$ &      No    &    MDS    \\ \hline
   1 &12&& &  $[[12,6,4]]_{11}$ &      Yes    &    MDS    \\ \hline
   1 &12&& &  $[[12,4,5]]_{11}$ &      Yes    &    MDS    \\ \hline
   1 &60& &   &$[[60,56,3]]_{11}$    & Yes&  MDS\\   \hline
   1 &60& &   &$[[60,54,4]]_{11}$    &Yes&  MDS \\  \hline
   1 &60& &   &$[[60,52,5]]_{11}$    & Yes&  MDS \\ \hline
   2 &12&15&   &$[[180,174,3]]_{11}$    & Yes&QHAMDS,  beats  $[[183,171,3]]_{11}$ in \cite{MatrixProductCodes2018}  \\ \hline
    2 &12&15&   &$[[180,164,5]]_{11}$    &No &  Beats  $[[183,159,5]]_{11}$ in \cite{MatrixProductCodes2018}  \\ \hline
    2 &60&120&   &$[[7200,7172,7]]_{11}$    &Yes &  Length not obtained with $m=1$\\ \hline
        \end{tabular}
        \caption{A $q=11$ sample of codes.}
        \label{t5}
\end{table}

\bigskip

\section*{Acknowledgements}

This publication has emanated from research conducted with the financial support of Science Foundation Ireland under Grant number 18/CRT/6049. For the purpose of Open Access, the author has applied a CC BY public copyright licence to any Author Accepted Manuscript version arising from this submission.

The second and third authors are partially supported by Grant TED2021-130358B-I00 funded by
MCIN/AEI/10.13039/501100011033 and by the “European Union
NextGenerationEU/PRTR”, as well as by Universitat Jaume I, grants UJI-B2021-02, GACUJIMA/2023/06 and PREDOC/2020/39.

The third author would also like to acknowledge the funding received from the UCD School of Mathematics and Statistics.

\bibliographystyle{plain}
\bibliography{biblio}
\end{document}